\newcommand{\abs}[1]{\ensuremath{\left| #1 \right|}}
\declaretheorem[parent=section, name=Theorem]{thm}
\declaretheorem[sibling=thm]{lemma}
\declaretheorem[sibling=thm]{proposition}
\declaretheorem[sibling=thm]{claim}
\declaretheorem[sibling=thm]{remark}
\declaretheorem[sibling=thm,style=definition]{definition}
\newtheorem*{remark*}{Remark}
\definecolor{RoyalAzure}{rgb}{0.0, 0.22, 0.66}
\definecolor{ForestGreen}{rgb}{0.13, 0.55, 0.13}
\renewcommand{\leq}{\leqslant}
\renewcommand{\geq}{\geqslant}
\newcommand{\eps}{\varepsilon}
\DeclareMathOperator*{\E}{\mathbb{E}}
\renewcommand{\Pr}{\mathbb{P}}
\title{Navigating an Infinite Space with Unreliable Movements}
\author{Anders Martinsson\\
		ETH Z\"{u}rich\\
		anders.martinsson@inf.ethz.ch
		\and
		Jara Uitto\thanks{Supported by ERC Grant No. 336495 (ACDC)}\\
		ETH Z\"{u}rich and University of Freiburg\\
		jara.uitto@inf.ethz.ch		
}
\date{}
\begin{document}
\maketitle

\begin{abstract}
We consider a search problem on a $2$-dimensional infinite grid with a single mobile agent. The goal of the agent is to find her way home, which is located in a grid cell chosen by an adversary. Initially, the agent is provided with an infinite sequence of \emph{instructions}, that dictate the movements performed by the agent. Each instruction corresponds to a movement to an adjacent grid cell and the set of instructions can be a function of the initial locations of the agent and home. The challenge of our problem stems from faults in the movements made by the agent. In every step, with some constant probability $0 \leq p \leq 1$, the agent performs a random movement instead of following the current instruction. 
%	
%Our main interest lies in the influence that the parameter $p$ has on the hitting time of home. In particular, we are investigate whether there exists a set of instructions that admits a finite expected hitting time. Clearly, if $p = 0$, the problem is trivial; the instructions can simply contain the shortest path home. If $p = 1$, the movements correspond to a random walk. In this case, folklore tells us that the agent will eventually find home but the time to do so has a very slowly decaying upper tail of $\Pr(t > T) = \Theta(1/\log t))$, where $T$ is a random variable that captures the time when home is found. In particular, expected time to find home tends to infinity.

%This paper provides two results on this problem. First, we show that for some values of $p$, there does not exist any set of instructions that guide the agent home in finite expected time. Second, we complement this impossibility result with an algorithm that, for sufficiently small values of $p$, yields a finite expected hitting time for home.
%In particular, we show that for any $p < 1$, our approach gives a hitting rate that decays polynomially (instead of logarithmically) as a function of time. In that sense, our approach is far superior to a standard random walk in terms of hitting time.
%The main contribution and take-home message of this paper is to show that, for some value of $0.01139\dots < p < 0.6554\ldots$, there exists a phase transition on the solvability of the problem.

%%removed logarithmically
This paper provides two results on this problem. First, we show that for some values of $p$, there does not exist any set of instructions that guide the agent home in finite expected time. Second, we complement this impossibility result with an algorithm that, for sufficiently small values of $p$, yields a finite expected hitting time for home.
In particular, we show that for any $p < 1$, our approach gives a hitting rate that decays polynomially as a function of time. In that sense, our approach is far superior to a standard random walk in terms of hitting time.
The main contribution and take-home message of this paper is to show that, for some value of $0.01139\dots < p < 0.6554\ldots$, there exists a phase transition on the solvability of the problem.
\end{abstract}

\newpage

\section{Introduction}
We study a search problem on an infinite $2$-dimensional grid, where the task of a mobile agent is to find \emph{home}, i.e., a designated grid cell chosen by an adversary.
The agent is endowed with a sense of orientation, i.e., at all times, the agent is able to distinguish between the four globally consistent cardinal directions.
The time is divided into discrete time steps and in every step, the agent is able to move to an adjacent grid cell.
Initially, the agent is placed in the origin of the grid and the agent operates according to an infinite sequence of \emph{instructions}, where an instruction corresponds to a movement to one of the cardinal directions.
The set of instructions can be a function of the location of home.
We incorporate errors in navigation during the search by introducing a parameter $0 \leq p \leq 1$, that corresponds to the probability of making a mistake in any time step $t$.
More precisely, in each time step, with probability $(1 - p)$ the agent executes the next instruction in its sequence and, with probability $p$, ignores the instruction and instead moves to an adjacent cell chosen uniformly at random.

The search process can be seen as a random walk augmented by a set of deterministic moves.
The case of $p = 0$ is rather unexciting, since the shortest path home yields deterministic strategy that clearly has a finite expected hitting time.
Also, if we set $p = 1$, we get into a setting where the agent simply follows an unbiased random walk.
This case is also degenerate in the sense that it is well-known that an unbiased random walk has an infinite expected hitting time on \emph{any} cell of the grid.
Hence, it is natural to ask whether we can bias the random walk with some determinism in order to obtain better (i.e., smaller) hitting times. 
Indeed, our main research questions is: Which values of $p$ admit a finite expected hitting time for home?

\begin{definition}
	Let a sequence of \emph{instructions} consisting of an infinite walk $\{x_t\}_{t=0}^\infty$ in $\mathbb{Z}^2$ with $x_0=0$ and a fixed probability $p$ that the agent makes a mistake be given. We call the random process defined by our search protocol a \emph{guided random walk}. Formally, this is defined as a countable state Markov chain $\{X_t\}_{t=0}^\infty$, where $X_0 = 0$ and
\begin{equation*}
X_{t+1}-X_t = \begin{cases} x_{t+1}-x_t&\text{ with probability }(1-p)\\
\operatorname{Unif}\left( (\pm 1, 0), (0, \pm 1)\right)&\text{ with probability }p \ .
\end{cases}\end{equation*}
\end{definition}

Our search problem is inspired by search tasks with limited information and traveling in unmarked terrains.
In practical settings, there are inherently errors in distance estimations and orientations.
In particular, in natural settings, such as foraging behavior of ants, the agents need to operate on imperfect information and need to be able to tolerate errors.
For example, a foraging desert ant needs to find its nest without knowing its exact location and without being able to leave markers on the ground.
However, when heading back to the nest from their search, the ants are able to estimate their distance to the nest quite well~\cite{Wittlinger1965} and have a good sense of the direction towards the nest~\cite{Wehner1981}.

\paragraph{Observations and Some Notation.} 
An initial observation about the guided random walk is that, for any sequence of instructions, the average position of the walk after $t$ steps is $(1-p)x_t$ with a standard deviation of $\Theta(\sqrt{pt})$.
We note that finding home becomes strictly harder when $p$ increases and similarly, easier when $p$ gets smaller in the following sense:
If home cannot be found for some value of $p_0$, the same holds for any $p \geq p_0$ and conversely, a strategy that works for $p_0$ can be adapted to work for any $p \leq p_0$.
For the remainder of the paper, we denote the location of home by $x_{\textrm{home}}$ and the random variable $T:=\inf\{t : X_t=x_{home}\}$ captures the time agent finds home.

\paragraph{Contributions and Technical Challenges.}
In this paper, we examine the proposed navigation problem from two different angles. Our first contribution, captured in \Cref{thm: impossibru}, is to show that no sequence of instructions guarantees finite expected hitting time if $p$ is sufficiently close to one.
The technical challenge is to show that the position of the guided random walk after $t$ steps is anti-concentrated in the sense that the probability of $X_t$ to equal a certain vertex in $\mathbb{Z}^2$ cannot be higher than $\Theta(1/pt)$. Moreover, this picture does not change too much if one conditions on certain properties of the history of $X_t$.

Guided by this analysis, we derive an upper bound on the probability that, given that a guided random walk has not found home at time $t$, it finds home before time $(1+\epsilon)t$. Informally speaking, this needs to be at least $\epsilon$ on average in order for the hitting time to be finite. We get the following result.

\begin{thm}
	\label{thm: impossibru}
	Let $T$ be the random variable that captures the time step where the agent finds home. If $p > 0.7805\dots$, then $\E[T] = \infty$.
\end{thm}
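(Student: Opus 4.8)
The plan is to show that if $p$ is large enough, then for \emph{every} sequence of instructions the guided random walk hits $x_{\mathrm{home}}$ with a hitting‑time tail so heavy that $\E[T]=\infty$. The key quantity to control is, for a walk that has survived up to time $t$, the conditional probability that it finds home during the next $\eps t$ steps. If this conditional probability is $o(1/t)$ uniformly (or at least summable in a suitable sense), then $\Pr[T>t]$ cannot decay faster than $1/t$, giving $\E[T]=\sum_t \Pr[T>t]=\infty$. So the heart of the argument is an anti‑concentration bound: for any $y\in\ZZ^2$ and any $t$,
\begin{equation*}
\Pr[X_t=y]\le \frac{C}{pt},
\end{equation*}
and, more importantly, a version of this that survives conditioning on the past. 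I would first prove the unconditional bound by writing $X_t=(1-p)x_t+\text{(zero‑mean fluctuation)}$ and analyzing the characteristic function: the random part of each step contributes a factor whose modulus is bounded away from $1$ on a macroscopic range of frequencies, so $\big|\E e^{i\langle\theta,X_t\rangle}\big|\le \prod_{s}\big(1-p+p\,\widehat\mu(\theta)\big)$ with $\widehat\mu$ the characteristic function of $\operatorname{Unif}(\pm e_1,\pm e_2)$; integrating $\theta$ over the torus and using $|\widehat\mu(\theta)|\le 1-c|\theta|^2$ near the origin yields the $\Theta(1/pt)$ local limit bound, uniformly over the deterministic shift $x_t$.

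Next I would make the conditional statement precise. Fix a time $t$; condition on the trajectory $(X_0,\dots,X_t)$, in particular on the event $\{T>t\}$ and on the current position $X_t=z$. For $s\in\{1,\dots,\eps t\}$ the increment $X_{t+s}-X_t$ is again a guided random walk of length $s$ (with the tail of the instruction sequence), so by the same characteristic‑function computation $\Pr[X_{t+s}=x_{\mathrm{home}}\mid X_t=z]\le C/(ps)$. Summing over $s=1,\dots,\eps t$ gives a bound of order $\frac{C}{p}\log\frac1\eps$ — which is $\Theta(1)$, not $o(1)$, so the naive union bound is too weak. This is the main obstacle: one needs to beat the logarithm. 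The fix, which I expect the authors use, is to not union‑bound step by step but to bound the \emph{expected number of visits} to $x_{\mathrm{home}}$ in the window $(t,(1+\eps)t]$ by the expected number of visits starting fresh, and then to argue via a renewal/last‑visit decomposition that $\Pr[\exists s\le \eps t: X_{t+s}=x_{\mathrm{home}}\mid T>t]$ is small: conditioning that home has \emph{not} been seen up to time $t$ but the walk currently sits near home is itself a low‑probability event, because the anti‑concentration bound shows the walk spends only $O(\log t)$ expected time total within $O(1)$ of home over all of $[0,t]$, so being close to home at time $t$ already costs a factor; combined with the $\Theta(1/pt)$ decay this should push the per‑window hitting probability below, say, $1/2$ when $p$ is close to $1$, which is what one needs to iterate across dyadic windows $[2^k,2^{k+1}]$ and conclude $\Pr[T>2^k]\ge c\,\rho^k$ is not summably small — more precisely, that $\sum_k 2^k\Pr[T>2^k]=\infty$.

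Concretely I would organise the proof as: (i) state and prove the local limit / anti‑concentration lemma $\Pr[X_t=y]\le C/(pt)$ via characteristic functions, uniform in the instruction sequence; (ii) upgrade it to a conditional version given the history up to time $t$, using the Markov property to reduce to a fresh guided walk; (iii) bound $q_t:=\Pr[T\le(1+\eps)t\mid T>t]$ using (ii) together with a last‑visit decomposition to absorb the logarithmic loss, obtaining $q_t\le\delta<1$ for all large $t$ once $p>p_0$; (iv) chain these along a geometric sequence of times $t_k=(1+\eps)^k$ to get $\Pr[T>t_k]\ge(1-\delta)^k$, and check that with $t_k$ growing only geometrically this forces $\E[T]=\sum_k(t_{k+1}-t_k)\Pr[T>t_k]=\infty$; (v) finally optimise the constants — the interplay between $\eps$, $\delta$, the constant $C$ in the anti‑concentration bound, and $p$ — to see that $p>0.7805\dots$ suffices. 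I expect step (iii), killing the $\log$, to be where essentially all the work and the precise numerical threshold live; steps (i), (ii), (iv) are standard once the right formulation is in place.
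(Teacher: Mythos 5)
Your skeleton matches the paper's at a high level (Fourier/characteristic-function anti-concentration at rate $\Theta(1/(pt))$, window-by-window conditional hitting estimates, chaining along geometric times), and your ``renewal/last-visit decomposition'' is essentially the paper's return-count bound (Proposition~\ref{prop:hitreturnbound}: divide the expected number of visits by $R_\tau(p)$, the minimal expected number of returns within $\tau$ steps). But your step (iii) --- the only place where the conditioning on $\{T>t\}$ is actually confronted --- has a genuine gap. First, a small but consequential slip: $\sum_{s\le \eps t} C/(ps)=\Theta(p^{-1}\log(\eps t))$, which grows with $t$; it is not $\Theta(1)$. The return-count division only buys a bounded constant factor ($R_4(p)\le 1.39$, and even $R_\infty(p)$ is finite), so it cannot absorb this logarithm. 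Second, your substitute argument --- ``given $T>t$, the walk is unlikely to sit near home at time $t$ because it spends only $O(\log t)$ expected time near home'' --- is circular: the relevant conditional probability is at best $\Pr(X_t\approx x_{\mathrm{home}})/\Pr(T>t)$, and in the regime your chaining produces ($\Pr(T>t)$ of order $1/t$) this ratio is only $O(1)$, not small. So as written, the per-window bound you need is not established.

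The paper resolves exactly this point differently, and that is where the theorem lives. Instead of conditioning at time $t$, it conditions at an intermediate time $s\approx t/2$ (Proposition~\ref{prop:hitconduncond}): then every time in the window $[t,(1+\eps)t)$ is at distance at least $t/2$ from the conditioning time, so each term contributes only $O(1/(pt))$ and the window total is $O(\eps/p)$ with no logarithm. The price is a denominator $\Pr(T\ge t\mid T\ge s)$, which is bounded below by $1/2$ by working at the \emph{first} geometric index $k_0$ at which the per-window hitting probability exceeds $\eps/(1+\eps)$ (such a $k_0$ must exist if $\E[T]<\infty$), with \Cref{lemma: anyhome} used to place home far away so that $k_0\gg \eps^{-1}$ (Proposition~\ref{prop:impossibilityanal}). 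Finally, the constants cannot be deferred to a final optimisation: the sharp Fourier constant $\frac{2}{\pi p\sqrt{3-2p}}$ together with the gain $R_\tau(p)>1$ is indispensable, since without $R_\tau$ the resulting condition $1\le 4/(\pi p\sqrt{3-2p})$ holds for \emph{all} $p\le 1$ and no threshold follows; likewise your dyadic windows with target $\delta\le 1/2$ are too lossy (they introduce a factor $\ln 3$ and sit exactly at the summability boundary), whereas the paper sends $\eps\to 0$ to obtain the clean analytic condition that fails for $p>0.7805\dots$.
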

\begin{proof}
	Using the estimate from Proposition \ref{prop:R4estimate} we see that the analytic condition from  Proposition \ref{prop:impossibilityanal} fails for $p> 0.7805\dots$. Hence the expected hitting time of any guided random walk must be infinite in this range.	
\end{proof}
\begin{remark*}
In fact, by using computer assistance to calculate the bound on $R_\tau(p)$ from Proposition \ref{prop:R4estimate} for higher values of $\tau$, it is possible to improve this bound to $p>0.6554\dots$.
\end{remark*}

%Without loss of generality, we can think of the agent moving deterministically and assume that the location of home behaves like a random walk.\jatodo{is this true?}

Our second contribution, captured in \Cref{thm: possibru}, is an algorithm that finds a sequence of instructions such that, for sufficiently small $p$, the agent is able to locate home in expected finite time. The intuition behind our strategy is relatively simple: After $t$ steps of a guided random walk, we expect the agent to be at distance $O(\sqrt{pt})$ from its estimated position. Keeping this in mind, we repeatedly tell the agent to perform sweeps covering an area within this distance $\Theta(\sqrt{pt})$ of the current estimation of the location of home. Once the sweep is done, we re-estimate the location and continue to the next sweep.

%Standard results on random walks tell us that after $t$ steps of a random walk, with high probability, we have an $O(\sqrt{t})$ drift from the original location. Hence, our estimate of the current location of home has an error of $O(\sqrt{t})$.

There are two main challenges of this analysis. We first show that, if the true position of the agent at the beginning of a sweep is close to its estimated position, then the sweep has a good chance to get the agent home at some time step. Second, we show that $X_t$ is unlikely to deviate too much from its estimated position, even when conditioning on the agent not having been home in any previous phase. This makes use of a Chernoff-type bound, which is updated recursively to account for dependencies.

%The two main challenge in the analysis is to first show that, if the true position of the agent

%is to deal with dependencies.
%Informally, whenever we discover that home is not in the current grid cell that the agent occupies, the probability distribution according to which the home is located is biased away from the current location of the agent.
%Hence, it could be that there are time steps during the execution that are very bad in the sense that the probability of finding home is very small.
%To mend this, we split the execution into many phases and show that there is ``enough randomness'' between the phases to consider them (almost) independent.

\begin{thm}
	\label{thm: possibru}
	Let $T$ be the random variable that captures the time step when the agent finds home. There exists a sequence of instructions such that, for any $p < 0.01139\dots$ and any $x_{home}$, the guided random walk following these instructions satisfies $\E[T] < \infty$.
\end{thm}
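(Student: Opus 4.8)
The plan is to make the informal sketch from the introduction precise by organizing the walk into geometrically growing phases and analysing one phase conditioned on a favourable event. Fix a small $p$ and a scaling parameter; let phase $k$ consist of roughly $T_k = \Theta(c^k)$ steps for some constant $c>1$, so that the time elapsed before phase $k$ is $\Theta(T_k)$ as well. During phase $k$ the instructions are designed as follows: using the history, the agent computes an estimate $\hat{x}$ of its own current location (this is just the accumulated instruction displacement, since the random-error displacements have zero mean), and then the phase-$k$ instructions perform a boustrophedon sweep of a square (or disk) of radius $R_k = \Theta(\sqrt{p T_k})$ centred at the estimated current distance from home, i.e. centred so that if the true position equals the estimate then the sweep passes through $x_{home}$. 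The length of such a sweep is $\Theta(R_k^2) = \Theta(p T_k)$, so choosing the phase length $T_k$ a large constant multiple of $R_k^2$ makes the sweep fit inside the phase with room to spare; this fixes the relation between $c$, $p$, and the geometric constants.

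The first main step is a \emph{single-phase success estimate}: conditioned on the event $G_k$ that at the start of phase $k$ the true position $X_{t_k}$ lies within $R_k/2$ (say) of its estimate, I want to show the sweep hits $x_{home}$ during phase $k$ with probability at least some absolute constant $q>0$, independent of $k$. The idea is that on $G_k$ the target cell lies in the interior of the swept square, and a guided random walk that is following a space-filling sweep of that square — perturbed by the $p$-fraction of random steps — visits each cell of the square with probability $\Omega(1/R_k^2)$; summing over the $\Theta(R_k^2)$ sweep steps that pass near $x_{home}$ gives a constant lower bound. Here I would lean on the anti-concentration/local-CLT-type control developed for Theorem 1.1 (the $\Theta(1/pt)$ pointwise bound on $X_t$ and its conditional version) to argue that the random perturbations do not destroy the sweep's coverage: at each step of the sweep the conditional distribution of the true position, given the history, is concentrated within $O(\sqrt{pt})$ of the intended sweep position, and this spread is comparable to $R_k$, so a constant fraction of the sweep's "time budget near home'' actually lands on $x_{home}$.

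The second main step is to control the events $G_k$ themselves: I must show $\Pr[\,\overline{G_k}\mid \text{not home before phase }k\,]$ is small, and in fact summably small, so that with probability bounded below the agent enjoys infinitely many "good'' phases, each independently-ish giving a constant hitting chance, forcing $T<\infty$ and in fact $\E[T]<\infty$ once the bad-phase probabilities decay fast enough. The deviation $X_t - \hat{x}_t$ is a sum of $\le t$ bounded, mean-zero (conditionally, after centring) increments contributed only by the error steps, of which there are $\approx pt$; a Chernoff/Azuma bound gives $\Pr[|X_{t_k}-\hat x_{t_k}|>R_k/2]\le \exp(-\Omega(R_k^2/(pT_k))) = \exp(-\Omega(1))$ — which by itself is only a constant, not small. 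The fix, as the introduction hints, is to run the Chernoff estimate \emph{recursively/conditionally}: condition on the walk having stayed within $R_j$ of its estimate in all earlier phases $j<k$ (which is implied, up to constants, by not-yet-home together with the sweep geometry), and show this conditioning only inflates the moment generating function by a controlled factor, so that the effective deviation budget is the \emph{incremental} one $R_k - R_{k-1} = \Theta(R_k)$ over the \emph{incremental} time $T_k - T_{k-1}=\Theta(T_k)$ with $\Theta(pT_k)$ fresh error steps; tuning the constant in $R_k = C\sqrt{pT_k}$ large (equivalently taking $p$ below the stated threshold $0.01139\dots$) makes $\exp(-\Omega(C^2))$ as small as needed, and summable.

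The main obstacle I anticipate is precisely this recursive conditioning: the event "not home in phases $1,\dots,k-1$'' is not independent of the increments in phase $k$, and a naive union bound over all cells visited in all past phases would cost a polynomial-in-$T_k$ factor that the exponential tail can absorb but that clutters the constants. Making the conditional Chernoff bound genuinely telescoping — so that the decay rate per phase stays bounded below by a constant depending only on $C$ and $c$, uniformly in $k$ — is the delicate part, and it is what pins down the explicit numerical threshold. The single-phase coverage estimate is conceptually the "physics'' of why the strategy beats a random walk, but technically it should follow fairly directly from the local-limit-type bounds already in hand; the bookkeeping that combines a constant per-phase success probability with super-summably-small per-phase failure probabilities into $\E[T]<\infty$ is then a routine Borel–Cantelli-plus-geometric-series computation.
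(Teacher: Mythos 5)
Your skeleton (geometrically growing phases, a sweep of a box of radius $\Theta(\sqrt{pt})$ around the estimated position, a recursively conditioned Chernoff bound to control the starting position of each phase) matches the paper's, but the load-bearing step — a \emph{constant} per-phase hitting probability for a full boustrophedon sweep — is exactly what the paper does not (and cannot, with its tools) prove, and your justification for it does not go through as stated. You invoke ``the anti-concentration/local-CLT-type control developed for Theorem 1.1'' to claim each cell is visited with probability $\Omega(1/R_k^2)$; but Proposition 2.1 is an \emph{upper} bound ($\max_x \Pr(X_t=x) \le O(1/(pt))$) and nothing in the paper gives a matching pointwise \emph{lower} bound for the inhomogeneous guided walk at the specific cell $x_{\mathrm{home}}$, let alone the conversion from ``expected number of visits $\Theta(1)$'' to ``hitting probability $\Omega(1)$'' (which needs an upper bound on expected returns after a first visit while the sweep instructions carry the walk away). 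These are substantive missing lemmas, not corollaries of what is in hand: the actual position during a row pass fluctuates vertically by $\gg 1$, so ``the sweep covers every cell, hence it hits home'' is false pathwise and must be replaced by a genuine local-limit argument. The paper sidesteps all of this: it sweeps only $\sim \ln t$ horizontal lines with gap $G \approx \sqrt{t}/\ln t$, prepends a \emph{uniformly random} vertical offset $Z$ of range $\approx G$, shows by Azuma/Chernoff (concentration only, no anti-concentration) that with high probability all deviations are $o(t^{1/3})$, and then argues that for at least one value of $z$ the trajectory passes exactly through home; uniformity of $Z$ converts this existence statement into a success probability of order $\ln t/\sqrt{t}$ per phase (Proposition 3.1). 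That coupling is the key idea your proposal is missing.

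The second gap is the endgame. ``Summably small bad-phase probabilities plus a constant per-phase success probability, then Borel--Cantelli'' gives $T<\infty$ almost surely but not $\E[T]<\infty$; what is needed is a rate comparison: with phase boundaries $t_i$, finiteness of $\E[T]$ requires the conditional per-phase success probability to exceed roughly $(1+\eps)\ln(t_{i+1}/t_i)$ uniformly in $i$ (this is Claim 3.5/Proposition 3.6 in the paper), and the explicit threshold $0.01139\dots$ comes precisely from balancing the phase duration $\Theta(\sqrt{t}\ln t)$ against the success probability $\Theta(\ln t/\sqrt{t})$ and the box-escape probability $4e^{-a^2/4+2\alpha}$. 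In your design with $T_k=\Theta(c^k)$ for a constant $c>1$ and the sweep fitting ``with room to spare,'' the required conditional success probability per phase is a fixed constant of order $1-c^{-1}$, which your coverage heuristic has no way of certifying (and note $R_k-R_{k-1}=\Theta(R_k)$ is only consistent with $c$ bounded away from $1$, which makes this requirement harsher). Also, your recursive Chernoff conditions on the past containment events; the paper's mechanism is different and simpler to control: it conditions on $\{T>t_i\}$ and caps the per-phase hitting probability at $\alpha(t_{i+1}-t_i)/t_i$ via a model modification (inequality (3.1)), so the conditioning inflates each exponential moment by at most $e^{\alpha(t_{i+1}-t_i)/t_{i+1}}$ and the product telescopes (Claims 3.2--3.3). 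Without some substitute for both the random-offset coupling and this capping device, your outline does not yet yield $\E[T]<\infty$ for any explicit range of $p$.
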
%\jatodo{Add proof here and change to a sharper number if possible DONE}
\begin{proof}
By Proposition \ref{prop:possibrucondition}, the strategy described in Section \ref{sec: possibru} with parameter $a>0$ yields a finite expected hitting time for any $p<p_0$ if
$$\frac{1}{2a^2} \left( 1-4e^{-\frac{a^2}4+2}\right) > \frac{p_0 \sqrt{3-2p_0}}{(1-p_0)^2}.$$
Optimizing the left-hand side over $a$, we get a maximum of $0.02011\dots$ at $a=4.566\dots$. Using this value of $a$, the analytic condition holds for any $p_0< 0.01139\dots$, as desired.
\end{proof}

\begin{remark}
By a slightly more careful consideration of the analysis of our strategy, one can see that, for any $p$, the asymptotic hitting probability converges significantly faster than in the unbiased case. More precisely, it is well-known that an unbiased random walk satisfies $\mathbb{P}(T>t) = \Theta(1/\log t)$ whereas, for any $p < 1$, our strategy obtains a hitting rate of $\mathbb{P}(T>t) = O(t^{-\alpha})$ for $\alpha=\Theta\left( (1-p)/\sqrt{p} \right)$.
\end{remark}

\vspace{2pt}
\begin{mdframed}[backgroundcolor=gray!20,topline=false,
  rightline=false,
  leftline=false,bottomline=false] 
\textbf{A Phase Transition Behavior:}

\noindent Let $\E_p[T]$ denote the expected time to find home given an error parameter $p$. Then, there is a constant value $0 < p^* < 1$ such that
	\begin{enumerate}
		\item for any $p < p^*$, there exists a sequence of instructions that yields $\E_p[T] < \infty$ and
		\item for any $p > p^*$, all sequences of instructions yield $\E_p[T] = \infty$.
	\end{enumerate}
\noindent Our work shows the existence of such a threshold and narrows down the range of possible values of the threshold. Finding the exact value of $p^*$ is left as an intriguing open question.
\end{mdframed}
\vspace{2pt}   

\subsection{Related Work}
Our work falls under a much wider umbrella of graph exploration.
In the standard graph exploration setting a mobile agent (or a group of them) is placed on one of the nodes of the graph.
The goal of the agent is to \emph{explore} the graph, i.e., visit every node or every edge of the graph by traversing the graph along its edges.
There are many variants of graph exploration.
For example, in directed~\cite{Deng1999, Albers2000} and undirected graph exploration~\cite{Duncan2006, Disser2016} the edge traversals are uni- and bidirectional, respectively.
Another distinction is to consider graphs where nodes are equipped with unique identifiers~\cite{Panaite1998} and anonymous graphs~\cite{Rollik1979, Bender1994}.
Our setting corresponds to the undirected and anonymous setting.
Typically, the efficiency of a graph exploration algorithm is measured in terms of its \emph{time complexity}, i.e., how many edge traversals are required to complete the exploration task.

Another measure of interest is the number of bits of memory the agent has~\cite{Fraigniaud2004}.
For example, Fraigniaud et al.~\cite{FraigniaudIPPP2005} showed that a deterministic agent needs $\Theta(D \log \Delta)$ memory to explore a graph with diameter $D$ and maximum degree $\Delta$.
A measure close to the number of bits of memory is the precision of randomness the agent is allowed to use~\cite{Lenzen2014}.
In the case of a random agent in a finite graph, the agent can simulate a random walk without any memory (and with $O(\log \Delta)$ precision) and achieve a polynomial hitting time for every node~\cite{Aleliunas1979}.
In a $2$-dimensional infinite grid, it is known that the random walk is null-recurrent, i.e., it will eventually reach every node, but the hitting time tends to infinity.
If the dimension grows larger, the walk is not guaranteed to reach every node.
In our setting, the agent has no memory, but blindly follows the instructions that it is given.
Cast in the terms of random walks, our research question is whether we can improve the random walk to have a finite expected hitting time on some node in the grid.

A classic graph exploration problem is the \emph{cow-path} problem proposed by Baeza-Yates, Culberson, and Rawlins~\cite{Baeza-Yates1993}.
The cow is located in some node of a path and its task is to find a dedicated node on the path.
In the case of the cow-path problem, the cow needs to fix her strategy without knowing the location of the dedicated node.
The authors showed that a simple spiral search strategy achieves a constant competitive ratio.
In a sense, our approach is an adaptation of the spiral search strategy that is resilient to the challenges caused by the random steps.

Also, exploring an infinite grid with a group of agents has been intensively studied~\cite{Lopez-Ortiz2001, Rollik1979, Fraigniaud2006, Emek15}.
A recent work that falls relative close to ours is by Cohen et al.~\cite{Cohen2017}, where they showed that even allowing two agents and a constant amount of memory per agent (i.e., essentially two correlated random walks) cannot make the expected hitting time finite.

\paragraph{Paper's Organization.}
In \Cref{sec: impossibru}, we provide the necessary tools for the proof of \Cref{thm: impossibru}.
Then, \Cref{sec: possibru} gives the means to prove \Cref{thm: possibru}.

\section{The Impossibility Result}\label{sec: impossibru}
In this section, we give the necessary claims to prove \Cref{thm: impossibru}.
We start with a technical lemma, where we derive an anti-concentration property of a guided random walk using Fourier analysis.
Then, the second step is to introduce a measure, $R_\tau(p)$, that very roughly speaking, bounds from below the drift of the guided random walk within $\tau$ steps, where $\tau$ will be chosen to be a small constant.
A more formal statement will be provided later in this section.
Then, as the third step, we split the execution into exponentially growing intervals.
We show that, conditioned on the event that home was not found prior to the current interval, home is unlikely to be found in the current interval.
%The last step is to show that if the history of the execution were to have a big influence on the probability of finding home, then it must be the case that the measure $R_\tau(p)$ is small yielding a contradiction. This is complete nonsense, sorry...

% we show that, if $p$ is sufficiently close to $1$, then there is no sequence of instructions such that yield a finite expected hitting time. We shall here show this for $p\geq 0.83$, but as will be elaborated on below, the same argument can be extended down to $p\geq 0.655\dots$

\subsection{Step 1: An Anti-Concentration Result for Guided Random Walks} 

\begin{proposition}\label{prop:fourier}
Let $X_t$ be a denote a guided random walk with any sequence of instructions and any (possibly random) initial position. Then
\[
	\max_{x \in\mathbb{Z}^2} \mathbb{P}( X_t = x) \leq \frac{2}{\pi t p\sqrt{3-2p} }+ O\left( \frac{(\ln t)^4}{t^3} \right) \ .
\]
\end{proposition}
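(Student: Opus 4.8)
The plan is to run the usual local-central-limit-theorem argument: Fourier analysis on $\mathbb{Z}^2$, whose dual group is the torus $[-\pi,\pi]^2$. First I would write $X_t = X_0 + \sum_{i=1}^t \xi_i$, where $X_0$ is independent of the independent increments $\xi_1,\dots,\xi_t$, each $\xi_i$ equalling the fixed unit vector $d_i := x_i - x_{i-1}$ with probability $1-p$ and uniform on $\{\pm e_1,\pm e_2\}$ with probability $p$; write $j(i)\in\{1,2\}$ for the coordinate axis of $d_i$. Fourier inversion gives $\mathbb{P}(X_t=x) = (2\pi)^{-2}\int_{[-\pi,\pi]^2}\mathbb{E}\left[e^{i\langle\theta,X_t-x\rangle}\right]d\theta$, hence $\max_x\mathbb{P}(X_t=x) \le (2\pi)^{-2}\int_{[-\pi,\pi]^2}\bigl|\mathbb{E}[e^{i\langle\theta,X_t\rangle}]\bigr|\,d\theta$. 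By independence $\mathbb{E}[e^{i\langle\theta,X_t\rangle}] = \mathbb{E}[e^{i\langle\theta,X_0\rangle}]\prod_{i=1}^t\psi_i(\theta)$ with $\psi_i(\theta) = (1-p)e^{i\langle\theta,d_i\rangle} + p\,\varphi(\theta)$ and $\varphi(\theta) = \tfrac12(\cos\theta_1+\cos\theta_2)$ the characteristic function of a uniform step; since $|\mathbb{E}[e^{i\langle\theta,X_0\rangle}]|\le 1$, it remains to bound $\int_{[-\pi,\pi]^2}\prod_{i=1}^t|\psi_i(\theta)|\,d\theta$ uniformly over the instruction sequence, and the random initial position plays no further role.

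Next I would localise the integral. From $|\psi_i(\theta)|^2 = (1-p)^2 + p^2\varphi(\theta)^2 + 2p(1-p)\varphi(\theta)\cos\langle\theta,d_i\rangle$ we get $|\psi_i(\theta)| \le (1-p)+p|\varphi(\theta)| \le 1$, with equality forced only at the two points of $[-\pi,\pi]^2$ (read as a torus) where $|\varphi|=1$, namely $(0,0)$ and $(\pi,\pi)$ — and this is independent of the $d_i$, since $\langle\theta,d_i\rangle\in\pi\mathbb{Z}$ at both points, so $\psi_i=\pm1$ there. Fix $\delta = C(\ln t)/\sqrt t$. Outside $\delta$-balls $B_0$ and $B_\pi$ around these two points, $|\varphi(\theta)|\le 1-c\delta^2$ for constants $c,C>0$ (because $\varphi$ is smooth and its modulus drops quadratically off its two maxima), so $\prod_i|\psi_i(\theta)| \le (1-cp\delta^2)^t \le e^{-cpt\delta^2} = t^{-cpC^2\ln t}$; integrating this over the bounded complement is super-polynomially small, in particular $O((\ln t)^4/t^3)$.

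Then I would treat the two balls. The shift $\theta\mapsto\theta+(\pi,\pi)$ sends $\varphi$ to $-\varphi$ and $e^{i\langle\theta,d_i\rangle}$ to $-e^{i\langle\theta,d_i\rangle}$, hence leaves $|\psi_i|$ unchanged, so $\int_{B_\pi}\prod_i|\psi_i| = \int_{B_0}\prod_i|\psi_i|$ and it suffices to handle $B_0$ and double — this is exactly where the factor $2$ in the bound comes from (the bipartite structure of $\mathbb{Z}^2$). On $B_0$, Taylor expansion gives $|\psi_i(\theta)|^2 = 1 - \tfrac p2|\theta|^2 - p(1-p)\theta_{j(i)}^2 + O(|\theta|^4)$, so $\log|\psi_i(\theta)| = -\tfrac p4|\theta|^2 - \tfrac{p(1-p)}2\theta_{j(i)}^2 + O(|\theta|^4)$; writing $n_k = \#\{i\le t: j(i)=k\}$ and summing,
\[
\prod_{i=1}^t|\psi_i(\theta)| = \exp\!\left(-a_1\theta_1^2 - a_2\theta_2^2 + O(t|\theta|^4)\right), \qquad a_k = \tfrac{tp}{4} + \tfrac{p(1-p)n_k}{2}.
\]
On $B_0$, $t|\theta|^4 = O((\ln t)^4/t) = o(1)$. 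Integrating the leading Gaussian over all of $\mathbb{R}^2$ (the tail outside $B_0$ is super-polynomially small) gives $\pi/\sqrt{a_1a_2}$, and since $a_1a_2 = \tfrac{t^2p^2}{16} + \tfrac{t^2p^2(1-p)}{8} + \tfrac{p^2(1-p)^2}{4}n_1n_2 \ge \tfrac{t^2p^2(3-2p)}{16}$ for every split $n_1+n_2=t$ (the worst case being $n_1n_2=0$, i.e.\ all instructions along a single axis), this is at most $4\pi/(tp\sqrt{3-2p})$. The non-Gaussian factor $e^{O(t|\theta|^4)}$ is absorbed into the error — cleanest by retaining the next Taylor order, where the quartic term in $\log|\psi_i|$ has a sign that lets it be dropped from an upper bound and the slack becomes $O(t|\theta|^6)$ — contributing $O((\ln t)^4/t^3)$. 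Adding the two balls and the negligible complement and dividing by $(2\pi)^2$ yields $\max_x\mathbb{P}(X_t=x) \le \tfrac{2}{\pi tp\sqrt{3-2p}} + O((\ln t)^4/t^3)$.

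The conceptual skeleton (Fourier inversion; two resonant frequencies contributing identically; the adversary's worst instruction directions all lie on one axis, which is what produces $\sqrt{3-2p}$ rather than a larger constant) is clean, so the hard part will be the uniform error control: verifying that $|\varphi|$ falls off quadratically away from exactly two points so the bulk of the torus is negligible, and — more delicately — bounding the non-Gaussian correction of $\prod_i|\psi_i|$ to the stated order \emph{uniformly over the adversarial instruction sequence}, which forces one to expand $\log|\psi_i|$ rather than $|\psi_i|$, keep track of the sign of the quartic coefficient, and combine this with the choice of cutoff $\delta=C(\ln t)/\sqrt t$ and a dyadic decomposition of $B_0$.
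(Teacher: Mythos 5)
Your argument is the same as the paper's: Fourier inversion on the torus, factoring the characteristic function over the $t$ steps (the paper groups them as $f_N^{t_N}f_S^{t_S}f_W^{t_W}f_E^{t_E}$, which is your product of $\psi_i$'s), the two resonant frequencies $(0,0)$ and $(\pi,\pi)$, a Gaussian approximation near them, and the observation that the worst instruction split puts all steps on one axis, giving $\sqrt{a_1a_2}\ge \tfrac{tp}{4}\sqrt{3-2p}$ and hence the constant $\tfrac{2}{\pi tp\sqrt{3-2p}}$. All of your variance computations and the optimization over $n_1+n_2=t$ are correct and match the paper.

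The one step that does not survive scrutiny is your claimed shortcut for the non-Gaussian correction: the quartic term of $\log|\psi_i|$ does \emph{not} have a favourable sign uniformly in $p$. Along the instructed axis the increment coordinate takes the values $+1,-1,0$ with probabilities $1-\tfrac{3p}{4},\tfrac{p}{4},\tfrac{p}{2}$, whose fourth cumulant is $\kappa_4=\tfrac{9p}{2}+O(p^2)>0$ for small $p$, so the $\theta_{j(i)}^4$ coefficient in $\log|\psi_i|$ is $+\tfrac{\kappa_4}{24}\theta_{j(i)}^4>0$ (it is negative only for $p$ close to $1$, e.g.\ $-\tfrac{1}{96}$ at $p=1$); the orthogonal coordinate likewise has positive $\kappa_4$ for $p<\tfrac23$. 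Hence you cannot simply drop the quartic term for an upper bound, and with your cutoff $\delta=C\ln t/\sqrt t$ the crude uniform bound $e^{O(t\delta^4)}$ only yields an error $O((\ln t)^4/t^2)$, short of the stated $O((\ln t)^4/t^3)$. The repair is standard and cheap: either shrink the cutoff to $\Theta(\sqrt{\ln t/t})$ (as the paper does, with a large constant so the off-resonance contribution is $O(t^{-3})$), and/or bound $e^{O(t|\theta|^4)}\le 1+O(t|\theta|^4)$ on the ball and integrate the correction against the Gaussian rather than uniformly, which gives a contribution $O(t)\cdot O\bigl((pt)^{-2}\bigr)\cdot\tfrac{\pi}{\sqrt{a_1a_2}}=O_p(t^{-3})$, comfortably inside the claimed error term. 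With that replacement your proof is complete and coincides with the paper's.
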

\begin{proof}
In what follows, we consider $t>0$ fixed, and let $X_0=0$. For any $\xi, \eta\in \mathbb{R}$, let
\[
	f(\xi, \eta) := \mathbb{E} \left[ e^{i (X_t^1 \xi + X_t^2 \eta)} \right] \ .
\]
This function is $2\pi$-periodic in both $\xi$ and $\eta$ and, for any $x=(x_1, x_2)$, satisfies
\begin{equation}\label{eq:fourierbound}
\mathbb{P}( X_t=x ) = \frac{1}{(2\pi)^2} \int_0^{2\pi}\int_0^{2\pi} f(\xi, \eta) e^{-i(x_1\xi + x_2\eta)}\,d\xi\,d\eta \leq \frac{1}{(2\pi)^2} \int_0^{2\pi}\int_0^{2\pi} \abs{f(\xi, \eta)}\,d\xi\,d\eta \ .
\end{equation}
Moreover, we can rewrite this function as
$$f(\xi, \eta) = f_N(\xi, \eta)^{t_N}f_S(\xi, \eta)^{t_S}f_W(\xi, \eta)^{t_W}f_E(\xi, \eta)^{t_E} \ ,$$
where $t_N, t_S, t_W, t_E$ are the numbers of times up until $t$ when the agent has been instructed to walk north, south, west and east respectively, and
\begin{equation}\label{eq:fnswe}\begin{split}
f_N(\xi, \eta) &= \left( 1-\frac{3p}{4} \right) e^{i\eta} + \frac{p}{4} e^{-i\eta} + \frac{p}{4} e^{-i\xi} + \frac{p}{4} e^{i\xi} \ , \\
f_S(\xi, \eta) &= \frac{p}{4} e^{i\eta} + \left( 1-\frac{3p}{4} \right) e^{-i\eta} + \frac{p}{4} e^{-i\xi} + \frac{p}{4} e^{i\xi} \ , \\
f_W(\xi, \eta) &= \frac{p}{4} e^{i\eta} + \frac{p}{4} e^{-i\eta} + \left( 1-\frac{3p}{4} \right) e^{-i\xi} + \frac{p}{4} e^{i\xi} \ , \\
f_E(\xi, \eta) &= \frac{p}{4} e^{i\eta} + \frac{p}{4} e^{-i\eta} + \frac{p}{4} e^{-i\xi} + \left( 1-\frac{3p}{4} \right) e^{i\xi} \ . \\
\end{split}\end{equation}

One can observe that the functions in \eqref{eq:fnswe} have absolute value one if and only if, up to periodicity, $(\xi, \eta)=(0, 0)$ or $(\pi, \pi)$, and have absolute value less than one everywhere else. Moreover, by Taylor expanding around either $(\xi_0, \eta_0)=(0, 0)$ or $(\xi_0, \eta_0)=(\pi, pi)$ we get the estimates $$\abs{f_N(\xi_0+\xi, \eta_0+\eta)}, \abs{f_S(\xi_0+\xi, \eta_0+\eta)} = e^{-\frac{1}{2}( p(\frac{3}{2}-p) \eta^2 + \frac{p}{2}\xi^2) + O(\xi^4+\eta^4)}$$ and $$\abs{f_W(\xi_0+\xi, \eta_0+\eta)}, \abs{f_E(\xi_0+\xi, \eta_0+\eta)} = e^{-\frac{1}{2}( p(\frac{3}{2}-p) \xi^2 + \frac{p}{2}\eta^2) + O(\xi^4+\eta^4)}.$$

Let $\epsilon>0$ be any sufficiently small constant. We can bound the integrand in the right-hand side of \eqref{eq:fourierbound} by the above expressions for any point $\epsilon$-close to $(0, 0)$ or $(\pi, \pi)$, and by $e^{-\Omega(\epsilon^2 t)}$ everywhere else to obtain
\begin{align*}
&\int_0^{2\pi}\int_0^{2\pi} \abs{f(\xi, \eta)}\,d\xi\,d\eta \\
&\leq 4\pi^2 e^{-\Omega(\epsilon^2 t)} + 2e^{O(\varepsilon^4)} \int_{-\epsilon}^\epsilon\int_{-\epsilon}^\epsilon e^{\frac{1}{2} C_H\xi^2} e^{\frac{1}{2} C_V \eta^2} \,d\xi\,d\eta\\
&\qquad\leq 4\pi^2 e^{-\Omega(\epsilon^2 t)} + \frac{4 (1+O(\varepsilon^4))}{\sqrt{C_HC_V}} \int_{-\infty}^\infty \int_{-\infty}^\infty e^{-\xi^2-\eta^2} \\
&\qquad= \frac{4\pi}{\sqrt{C_H C_v}}(1+O(\epsilon^4)) + 4\pi^2 e^{-\Omega(\epsilon^2 t)},
\end{align*}
where $C_H=(t_E+t_W) \frac{p}2(3-2p) + (t_N+t_S) \frac{p}{2}$ and $C_V=(t_E+t_W) \frac{p}{2}(3-2p) + (t_N+t_S) \frac{p}{2}$. The product $C_HC_V$ is minimized when either $t_N+t_S=t$ and $t_E+t_W=0$ or $t_N+t_S=0$ and $t_E+t_W=t$, which implies that $\sqrt{C_HC_V} \geq t \frac{p}{2} \sqrt{3 - 2p}$. Letting $\epsilon$ equal a sufficiently large constant times $\sqrt{\frac{\ln t}{t}}$ yields the desired bound.
\end{proof}

\subsection{Step 2: Bounding Hit Probabilities} In order to make use of the estimate from Step 1, we would like to be able to bound the probability that $X_t=x$ for some $t \in [t_1, t_2]$ in terms of the expected number of such $t$. One simple way of obtaining such a bound would be Markov's inequality. However, the bound one would obtain in this way is slightly too weak for our purposes. So, in this step we will derive a stronger bound. This makes use of the fact that, due to the nature of a (mostly) random walk, if the agent reaches $x_{\mathrm{home}}$, it is likely to return there in subsequent time steps.

\begin{proposition}\label{prop:hitreturnbound}
For any positive integer $\tau$, let $$R_\tau(p) := \min \mathbb{E}[\#\{t \in [\tau] : X_i=0\} \mid X_0=0]$$
be the expected number of times between $0$ and $\tau$ a guided random walk will return to to its initial position, where the minimum is taken over all possible sequences of instructions the agent could receive.

Then, for any $0 \leq t_1 \leq t_2$, and any $x\in \mathbb{Z}^2$, a guided random walk with any (possibly random) initial position $X_0$ satisfies
\begin{equation}\label{eq:hitreturnbound}
\mathbb{P}( X_t=x \text{ for some } t\in [t_1, t_2]) \leq \mathbb{E}[\#\{t \in [t_1, t_2+\tau]: X_t=x\}]/R_\tau(p) \ .
\end{equation}
\end{proposition}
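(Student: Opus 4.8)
The plan is to condition on the \emph{first} time the walk visits $x$ inside the window $[t_1,t_2]$ and then exploit the fact that, from that moment on, the process is again a guided random walk started at $x$, whose expected number of returns to $x$ over the next $\tau$ steps is — by the very definition of $R_\tau(p)$ as a minimum over all instruction sequences — at least $R_\tau(p)$. Thus every visit to $x$ in $[t_1,t_2]$ forces, in expectation, at least $R_\tau(p)$ visits to $x$ in $[t_1,t_2+\tau]$, which is exactly the claimed inequality.

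Concretely, I would first reduce to a deterministic initial position: the asserted inequality is linear in the law of the walk, so it suffices to prove it for each fixed starting vertex and then average over $X_0$. So assume $X_0$ is fixed. Let $A := \{X_t = x \text{ for some } t\in[t_1,t_2]\}$ and, on $A$, set $S := \min\{t\in[t_1,t_2] : X_t = x\}$, a random variable with values in the finite set $\{t_1,\dots,t_2\}$. Since $t_1 \le S \le t_2$ we have $[S,S+\tau]\subseteq[t_1,t_2+\tau]$, so pointwise
\[
\#\{t\in[t_1,t_2+\tau] : X_t = x\} \;\ge\; \mathbbm{1}_A\cdot\#\{t\in[S,S+\tau] : X_t = x\}.
\]
Taking expectations and decomposing $A = \bigsqcup_{s=t_1}^{t_2}\{S=s\}$ reduces the whole statement to lower-bounding $\mathbb{E}\big[\#\{t\in[s,s+\tau]:X_t=x\}\,\big|\,S=s\big]$ by $R_\tau(p)$ for each $s$.

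The core step is that conditional estimate. The event $\{S=s\}$ is measurable with respect to $X_0,\dots,X_s$, and on it $X_s=x$; hence by the Markov property applied at the fixed time $s$ (summed over the countably many values of $s$ — the strong Markov property is not actually needed), the conditional law of $(X_{s},X_{s+1},\dots)$ given $\{S=s\}$ is that of the process started from $x$ under the shifted instructions $(x_s,x_{s+1},\dots)$. Setting $Y_u := X_{s+u}-x$, one checks that $\{Y_u\}$ is precisely a guided random walk started at $0$ with instruction walk $u\mapsto x_{s+u}-x_s$ — the dynamics depend only on the increments of the instruction walk, which the translation preserves — and $\#\{t\in[s,s+\tau]:X_t=x\}=\#\{u\in[0,\tau]:Y_u=0\}$. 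Since $R_\tau(p)$ is the minimum of the expected number of returns to the origin within $\tau$ steps over \emph{all} admissible instruction walks, and $u\mapsto x_{s+u}-x_s$ is such a walk, this conditional expectation is at least $R_\tau(p)$. Summing over $s$ gives $\mathbb{E}[\#\{t\in[t_1,t_2+\tau]:X_t=x\}]\ge \mathbb{P}(A)\,R_\tau(p)$, and dividing by $R_\tau(p)$ yields \eqref{eq:hitreturnbound} (the case $R_\tau(p)=0$ being vacuous).

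The only point needing genuine care is making this ``restart'' argument precise: that $S$ is a legitimate first-visit time, that conditioning on $\{S=s\}$ perturbs the future only through fixing $X_s=x$, and that the shifted–translated instruction walk is an admissible competitor in the infimum defining $R_\tau(p)$. A minor bookkeeping caveat is the convention for the index set $[\tau]$ in the definition of $R_\tau(p)$ (whether the trivial visit at time $0$ is counted): under either reading, $\#\{t\in[s,s+\tau]:X_t=x\}$ dominates the quantity whose minimum is $R_\tau(p)$, so the bound is unaffected.
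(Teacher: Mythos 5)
Your proposal is correct and follows essentially the same route as the paper's proof: condition on the first time $t'\in[t_1,t_2]$ that the walk hits $x$, observe that $Y_s := X_{t'+s}-X_{t'}$ is again a guided random walk started at the origin (so its expected number of returns within $\tau$ steps is at least $R_\tau(p)$ by definition of the minimum), and conclude by summing over the first-hit time. Your write-up merely makes explicit the bookkeeping (decomposition over $\{S=s\}$, admissibility of the shifted instruction walk, the degenerate case $R_\tau(p)=0$) that the paper leaves implicit.
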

\begin{proof}
Conditioned on the event that $t'$ is the smallest integer $t\in [t_1, t_2]$ such that $X_t=x$, the process $Y_s:=X_{t'+s}-X_{t'}$ behaves as a guided random walk starting in the origin. Thus the expected number of times $s\in[0, \tau]$, that is times $t\in [t', t'+\tau]$ when $X_t=X_{t'}=x$ is at least $R_\tau(p)$.
\end{proof}

\paragraph{Estimating $R_\tau(p)$.}
Next, we compute estimates for $R_{\tau}(p)$. We derive a general upper bound for $R_\tau(p)$ for any $\tau$, and then compute this quantity explicitly for $\tau=4$, which is what will be used later in the proof. Using larger values of $\tau$ would result in in stronger bounds on $p$, but as the coefficients get increasingly hard to compute by hand, we will not do this here.

%First, we give a more detailed proof for $\tau = 4$ and $p = 5/6$, since this claim can be verified by using a pen and paper.
%Then, we give a sharper estimate that we obtained through computer assistance.
%The second claim is not really feasible to verify by hand and hence, to convince the reader about the feasibility of our approach, we decided to add the weaker claim and its proof.

\begin{proposition}\label{prop:R4estimate} 
Let $W_r(a, b)$ denote the number of $r$-step walks in the grid from $0$ to $(a, b)$ and let $W_{r,s}$ denote the minimum of $W_r(a, b)$ over all $a, b$ such that $\abs{a}+\abs{b}\leq s$ and $a+b\equiv s \text{ mod }2$. Then, for any even $\tau>0$,
\begin{equation}
R_{\tau}(p) \geq \sum_{k=0}^{\tau/2} \sum_{r+s=2k} {r+s\choose r} (1-p)^s \left(\frac{p}{4}\right)^r W_{r,s} \ .
\end{equation}
In particular,
\[
	R_4(p) = 1 + \frac{1}{4}p^2+\frac{1}{2}p(1-p) + \frac{9}{64}p^{4}+ \frac{9}{16}p^{3}(1-p)+\frac{3}{8} p^2(1-p)^2 \ .
\]
%$R_4(5/6)>1.33$. %$E_{30}(2/3) > 1.5$.
\end{proposition}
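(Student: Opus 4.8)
The plan is to establish the lower bound on $R_\tau(p)$ by restricting the minimum over instruction sequences to a particular sub-sum, and then to verify the explicit formula for $\tau=4$ by a direct enumeration. First I would observe that $R_\tau(p) = \min \E[\#\{t \in [\tau] : X_t = 0\}] = \min \sum_{t=1}^\tau \Pr(X_t = 0)$, where the minimum is over instruction sequences. Since $X_t = 0$ requires $t$ even, only terms with $t = 2k$ contribute. For a fixed even $t = 2k$, I would decompose $\Pr(X_t = 0)$ according to which of the $t$ steps follow the instruction (happening with probability $1-p$) and which are random (happening with probability $p$, then uniform over the four neighbours). If exactly $s$ of the steps are ``instruction'' steps and $r = t - s$ are ``random'' steps, then conditioned on this choice of which steps are which — there are $\binom{r+s}{r}$ such choices, each with probability $(1-p)^s (p/4)^r$ after fixing the random-step directions — the walk returns to $0$ iff the $r$ random displacements cancel the net instruction displacement. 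Crucially, the $s$ instruction steps move the walk by some fixed vector $(a,b)$ with $|a| + |b| \le s$ and $a + b \equiv s \pmod 2$ (depending on the instruction sequence), and the number of ways the $r$ random steps can return to $-(a,b)$ is exactly $W_r(-a,-b) = W_r(a,b) \ge W_{r,s}$.

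The next step is to assemble these pieces: for each even $t = 2k \le \tau$ and each split $r + s = 2k$, the contribution to $\Pr(X_t = 0)$ is at least $\binom{r+s}{r}(1-p)^s (p/4)^r W_{r,s}$, where this bound is uniform over instruction sequences because $W_{r,s}$ was defined as the minimum of $W_r(a,b)$ over exactly the reachable displacements $(a,b)$. Summing over all even $t$ from $0$ to $\tau$ — equivalently over $k$ from $0$ to $\tau/2$ — and over all splits $r+s=2k$ gives the claimed inequality. One technical point to handle carefully is the parity constraint $a + b \equiv s \pmod 2$: an $s$-step walk always has $a + b$ of the same parity as $s$, so $W_{r,s}$ should only range over $(a,b)$ satisfying this, which is exactly how it is defined; also $W_{r,s}$ should be interpreted as $0$ (or the sum term dropped) when $r < s$ makes cancellation impossible, but since $r + s = 2k$ is even, $r$ and $s$ have the same parity and $r \ge s$ is not automatic — I would note that $W_r(a,b) = 0$ whenever $r < |a|+|b|$ or $r \not\equiv a+b \pmod 2$, and the bound remains valid since it only asserts a lower bound.

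For the explicit $\tau = 4$ formula, I would enumerate the terms $k = 0, 1, 2$. For $k=0$: the empty walk contributes $1$. For $k=1$ ($t=2$): the splits are $(r,s) = (2,0)$ and $(r,s) = (0,2)$ — note $(r,s)=(1,1)$ has $r+s=2$ but then one random and one instruction step, and $a+b \equiv 1$, $|a|+|b| = 1$, so $W_{1,1}$ counts $1$-step walks to a fixed neighbour, which is $1$; so actually all three splits with $r+s=2$ appear. For $(2,0)$: $\binom{2}{2}(p/4)^2 W_{2,0}$ where $W_{2,0}$ is the number of $2$-step return walks, which is $4$, giving $4 \cdot p^2/16 = p^2/4$. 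For $(1,1)$: $\binom{2}{1}(1-p)(p/4) W_{1,1} = 2(1-p)(p/4)(1) = p(1-p)/2$. For $(0,2)$: $\binom{2}{0}(1-p)^2 W_{0,2}$ but a $0$-step walk cannot reach a point at distance $2$, so $W_{0,2} = 0$ and this term vanishes. That matches $\tfrac14 p^2 + \tfrac12 p(1-p)$. For $k=2$ ($t=4$), the relevant splits $r+s=4$ with $r \ge s$-feasibility are $(4,0), (3,1), (2,2)$ — and $(1,3), (0,4)$ vanish since the instruction displacement has distance $3$ or $4$ which $r=1$ or $r=0$ random steps cannot cancel. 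I would compute $W_{4,0} = 36$ (number of $4$-step closed walks on $\ZZ^2$), giving $\binom{4}{4}(p/4)^4 \cdot 36 = 36 p^4/256 = 9p^4/64$; then $W_{3,1}$: minimum over neighbours $(a,b)$ of the number of $3$-step walks to $(a,b)$, which is $9$, giving $\binom{4}{3}(1-p)(p/4)^3 \cdot 9 = 4 \cdot 9 (1-p) p^3/64 = 9p^3(1-p)/16$; and $W_{2,2}$: minimum over $(a,b)$ with $|a|+|b| = 2$ of the number of $2$-step walks there, which is $1$ for the corners $(\pm1,\pm1)$ and $2$ for $(\pm2,0),(0,\pm2)$, so the minimum is $1$, giving $\binom{4}{2}(1-p)^2(p/4)^2 \cdot 1 = 6(1-p)^2 p^2/16 = 3p^2(1-p)^2/8$. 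Adding the $k=0,1,2$ contributions reproduces the stated expression. The main obstacle I anticipate is bookkeeping the combinatorial coefficients $W_{r,s}$ correctly — in particular making sure the minimum is taken over the right set of endpoints and that degenerate cases ($r < s$, wrong parity) are handled so that the inequality direction is never violated; the structural argument itself is a routine conditioning on the instruction-vs-random pattern.
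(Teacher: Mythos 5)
Your argument for the general inequality is essentially the paper's: condition on which of the $2k$ steps are ``random'' in the latent sense (a step counts as random whenever the $p$-coin comes up random, even if the sampled direction happens to coincide with the instruction), observe that the $s$ executed instructions contribute a fixed displacement $(-a,-b)$ with $\abs{a}+\abs{b}\le s$ and $a+b\equiv s \pmod 2$, and bound the number of cancelling assignments of the $r$ random directions below by $W_{r,s}$; summing over even times up to $\tau$ gives the bound, and your term-by-term evaluation for $\tau=4$ ($W_{2,0}=4$, $W_{1,1}=1$, $W_{0,2}=0$, $W_{4,0}=36$, $W_{3,1}=9$, $W_{2,2}=1$, $W_{1,3}=W_{0,4}=0$) matches the paper's and reproduces the stated polynomial.

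There is, however, one genuine omission: the second display in the proposition is an equality, $R_4(p)=\cdots$, while your proof only yields $R_4(p)\ge\cdots$. Since $R_4(p)$ is a minimum over instruction sequences, you must also exhibit a sequence that attains the bound. The paper does this by taking the instructions to be a straight line in a single cardinal direction, say all east: then the net displacement of the executed instructions is $s$ steps east regardless of which positions were random, so the conditioning decomposition is an exact identity rather than an inequality, and one checks $W_r(s,0)=W_{r,s}$ for every split $r+s\in\{0,2,4\}$ (for instance $W_2(2,0)=1=W_{2,2}$ and $W_3(1,0)=9=W_{3,1}$), so every term is tight. Separately, a small slip worth correcting even though it does not change the minimum: the number of two-step walks to $(\pm1,\pm1)$ is $2$ and to $(\pm2,0)$ or $(0,\pm2)$ is $1$ (you have these swapped), and the minimum defining $W_{2,2}$ ranges over all $(a,b)$ with $\abs{a}+\abs{b}\le 2$ of the correct parity, so it also includes $(0,0)$ with $W_2(0,0)=4$; in all cases the minimum remains $1$, so the final formula is unaffected.
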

\begin{proof} Let us initially consider any fixed $\tau$, $p$, and any sequence of instructions $x_0, x_1, \dots$. By linearity of expectation, the expected number of times $X_t=0$ for $t\in [\tau]$ is given by $\sum_{k=0}^{\tau/2} \mathbb{P}(X_{2k}=0).$

Condition on the event that the agent has taken precisely $r$ random steps before time $2k$, and the times at which these occurred. Here we consider a step as random if it is produced by walking in a random direction \emph{even if} this direction turns out to match the instructions.
Given this information, we can split the movement of the agent up until time $2k$ into two terms: the translation produced by the $s:=2k-r$ non-random steps, which is some vector $(-a, -b)$ where $\abs{a}+\abs{b}\leq s$ and $a+b\equiv s\text{ mod }2$, and the translation produced by the $r$ random steps, which would then have to equal $(a, b)$ in order for $X_{2k}=0$.

Using this, we get the lower bound
\begin{equation}\label{eq:expreturnformula}
\mathbb{P}(X_{2k}=0) \geq \sum_{r+s=2k} {r+s\choose r} (1-p)^s \left(\frac{p}{4}\right)^r W_{r,s}.
\end{equation}
The first formula in the statement of the proposition immediately follows.

%It seems intuitively clear that $W_r(a, b)$, with restrictions as above, is always minimized by $(\pm s, 0), (0, \pm s)$, which would mean that the optimal instructions to minimize the expected number of returns to the starting vertex is to always walk in the same cardinal direction. However, as explicitly finding $W_{r,s}$ with computer assistance is not significantly harder than finding $W_r(s, 0)$, we omit a proof of this.

It is easily seen that $W_{0,0}=1$, $W_{2,0}=4, W_{1,1}=1,$ and $W_{0,2}=0$, and it can further be seen without too much effort that $W_{4,0}=36, W_{3,1}=9, W_{2,2}=1,$ and $W_{1, 3}=W_{0,4}=0$, which yields the second formula in the statement of the proposition. Equality follows from the fact that the lower bound is attained exactly if the instructions follow a straight line in any cardinal direction.

% Extending this analysis with computer assitance for times $0, 2, \dots, 30$ yields the desired expression.\footnote{In fact, this estimate would be enough to prove that finite expected hitting time is impossible for $p>0.8241$. Just plug this estimate into the analytic condition in Proposition 3.8 to verify.}
\end{proof}
\begin{comment}
\begin{claim}\label{claim: R30estimate}
	$R_{30}(2/3) > 1.5$
\end{claim}
\begin{proof}[Proof Sketch]
	To improve on the estimate in \Cref{claim:R4estimate}, we implemented a simple script that, using dynamic programming, computes an estimate of $R_{30}(2/3)$.
	Otherwise, the proof follows the same steps as the proof for $R_4(5/6)$.
\end{proof}
\begin{remark*}
	By using more computing power, it is possible to improve on the estimates we provide. 
	However, this approach is unlikely to provide us with the exact bounds and hence, new insights are required.
\end{remark*}
\end{comment}

\subsection{Step 3: Handling Dependencies}

So far in the proof, we have only dealt with properties of the guided random walk at some time $t$ \emph{without} any conditions on the history of the process up to time $t$. Here, we combine the results from the previous two steps to obtain an estimate for the probability that $T<(1+\eps)t$ given that $T\geq t$.

To achieve this, we pick a time $s \in [0, t]$. (It turns out optimal to choose $s\approx \frac{t}{2}$.) Intuitively, if the agent has not found home yet at time $s$, and we know that it is unlikely for her to get home during the interval $[s, t]$, then conditioning on her not finding home during this interval should not affect the behavior of the agent after time $s$ too much.
Let $x_{\mathrm{home}}$ denote the location of home.

\begin{proposition}\label{prop:hitconduncond}For any $\epsilon>0$ and any $0\leq s \leq t$, we have
$$\mathbb{P}( T < (1+\epsilon)t \mid T \geq t) \leq \frac{\mathbb{P}(X_r = x_{\mathrm{home}}\text{ for some }r\in [t, (1+\epsilon)t) \mid T\geq s )}{\mathbb{P}(T\geq t \mid T\geq s)} \ .$$
\end{proposition}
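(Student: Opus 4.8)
The plan is to unfold both conditional probabilities into ratios of unconditional ones, insert the event $\{T\geq s\}$ by multiplying and dividing by $\mathbb{P}(T\geq s)$ (legitimate because this event contains $\{T\geq t\}$), and finally relax the hitting‑time event in the numerator to the weaker ``$X_r=x_{\mathrm{home}}$ for some $r$'' event.

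First I would record the key inclusion: since $s\leq t$, we have $\{T\geq t\}\subseteq\{T\geq s\}$. Assuming $\mathbb{P}(T\geq s)>0$ (otherwise the claimed inequality is vacuous, as its left‑hand side is then a conditional probability on a null event, or one may simply note $\mathbb{P}(T\geq t)\le\mathbb{P}(T\geq s)=0$), this inclusion gives, for any event $A$,
\[
\mathbb{P}\bigl(A\cap\{T\geq t\}\bigr)=\mathbb{P}\bigl(A\cap\{T\geq t\}\mid T\geq s\bigr)\,\mathbb{P}(T\geq s).
\]
Applying this with $A=\{T<(1+\epsilon)t\}$ to the numerator of $\mathbb{P}(T<(1+\epsilon)t\mid T\geq t)=\mathbb{P}(\{T<(1+\epsilon)t\}\cap\{T\geq t\})/\mathbb{P}(T\geq t)$, and with $A$ the whole space to the denominator, the common factor $\mathbb{P}(T\geq s)$ cancels and yields
\[
\mathbb{P}\bigl(T<(1+\epsilon)t\mid T\geq t\bigr)=\frac{\mathbb{P}\bigl(t\leq T<(1+\epsilon)t\mid T\geq s\bigr)}{\mathbb{P}\bigl(T\geq t\mid T\geq s\bigr)}.
\]

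It remains to bound the numerator from above. On the event $\{t\leq T<(1+\epsilon)t\}$ the walk satisfies $X_T=x_{\mathrm{home}}$ with $T\in[t,(1+\epsilon)t)$, so this event is contained in $\{X_r=x_{\mathrm{home}}\text{ for some }r\in[t,(1+\epsilon)t)\}$; monotonicity of (conditional) probability then replaces the numerator by $\mathbb{P}(X_r=x_{\mathrm{home}}\text{ for some }r\in[t,(1+\epsilon)t)\mid T\geq s)$, which is exactly the stated bound. I do not expect any genuine obstacle here — the proof is a short sequence of elementary conditioning identities and one monotonicity step. The only points requiring care are getting the inclusion $\{T\geq t\}\subseteq\{T\geq s\}$ the right way around and dispatching the degenerate case $\mathbb{P}(T\geq s)=0$; the substantive choice embodied in the proposition is the passage to the conditioning event $\{T\geq s\}$, which is what later allows Proposition~\ref{prop:hitreturnbound} together with the anti‑concentration estimate of Proposition~\ref{prop:fourier} to control the numerator while leaving the history before time $s$ essentially unconstrained.
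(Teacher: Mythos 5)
Your proof is correct and follows essentially the same route as the paper: the paper phrases the argument through events $A,\bar B,\bar C$ (with $\{T\ge t\}=\bar B\cap\bar C$, $\{T\ge s\}=\bar C$) and the multiplication rule, which is exactly your conditioning-on-$\{T\ge s\}$ decomposition plus the inclusion of $\{t\le T<(1+\epsilon)t\}$ in the hitting event. No gaps; your handling of the degenerate case is a minor addition the paper leaves implicit.
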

\begin{proof}
Let $A, B$ and $C$ denote the events that $X_r=x_{\mathrm{home}}$ for some $r\in [t, (1+\epsilon)t)$, some $r \in [s, t)$, and some $r < s$ respectively. Then, we can rewrite the inequality above as
$$\mathbb{P}( A \mid \bar{B} \cap \bar{C} ) \cdot \mathbb{P}(\bar{B} \mid \bar{C}) \leq \mathbb{P}(A \mid \bar{C}) \ .$$
But by the multiplication rule for conditional probabilities, the left-hand side equals $\mathbb{P}(A\cap \bar{B} \cap \bar{C}) / \mathbb{P}(\bar{C}) = \mathbb{P}(A\cap \bar{B} \mid \bar{C})$, which is clearly less than the right-hand side.
\end{proof}

By combining Propositions \ref{prop:fourier} and \ref{prop:hitreturnbound}, we can further obtain an estimate for the numerator in Proposition \ref{prop:hitconduncond}. Here we assume that $t$ is sufficiently large, and $s=\frac{t}{2}+O(\epsilon t)$. Conditioned on $T\geq s$, the agent will move according to an unconditioned guided random walk starting from time $\lfloor s \rfloor$. Hence, for any $r\in [t, (1+\epsilon) t)$, we have
\[
	\mathbb{P}( X_r=x_{\mathrm{home}} \mid T\geq s) \leq \frac{2}{\pi (t-s) p \sqrt{3-2p}} + O((t-s)^{-2}) = \frac{4}{\pi tp\sqrt{3-2p}} + O\left( \frac{\epsilon}{t}+t^{-2} \right) \ .
\]
Moreover, as it is only possible for $X_r=x_{\mathrm{home}}$ at every second time step, due to parity, there are $\frac{1}{2}\epsilon t + O(\tau)$ times during $[t, (1+\epsilon)t+\tau)$ when the hitting probability could be non-zero. Hence,
\[
	\mathbb{E}[\#\{r \in [t, (1+\epsilon)t+\tau) : X_r=x_{\mathrm{home}}\}| T\geq s] \leq  \frac{2\epsilon}{\pi  p \sqrt{3-2p}}+ O\left( \epsilon^2+ \frac{\tau}{t} \right) \ .
\]
Finally, by applying Proposition \ref{prop:hitreturnbound}, we obtain the following:
\begin{proposition}\label{prop:impossibruniceformula} For any $\epsilon>0$, any sufficiently large $t$ and any $s=\frac{t}{2}+O(\epsilon t)$,
\[
 \mathbb{P}( X_r=x_{\mathrm{home}}\text{ for some }r\in [t, (1+\epsilon)t) \mid T\geq s) \leq \frac{2\epsilon}{\pi p \sqrt{3-2p}R_\tau(p)} + O\left( \epsilon^2 + \frac{\tau}t \right) \ .
\]
\end{proposition}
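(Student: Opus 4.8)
The plan is to follow the discussion above and simply assemble Propositions~\ref{prop:fourier} and~\ref{prop:hitreturnbound}, the only real point being the bookkeeping of the conditioning on $T\geq s$. The first observation is that the event $\{T\geq s\}$ is determined by $X_0,\dots,X_{\lfloor s\rfloor}$, so that, conditioned on it, the shifted process $u\mapsto X_{\lfloor s\rfloor+u}$ is again a guided random walk: it follows the tail $x_{\lfloor s\rfloor}, x_{\lfloor s\rfloor+1},\dots$ of the instruction sequence and starts from the (random) position $X_{\lfloor s\rfloor}$. This is exactly the generality -- ``any sequence of instructions'' together with ``any (possibly random) initial position'' -- in which Propositions~\ref{prop:fourier} and~\ref{prop:hitreturnbound} are stated, so both are available for this conditional walk.

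Next I would apply Proposition~\ref{prop:hitreturnbound} to this conditional walk with $x=x_{\mathrm{home}}$ and with the window $[t,(1+\epsilon)t)$ (translated by $-\lfloor s\rfloor$). This bounds $\mathbb{P}(X_r=x_{\mathrm{home}}\text{ for some }r\in[t,(1+\epsilon)t)\mid T\geq s)$ by $1/R_\tau(p)$ times the conditional expected number of $r$ in the slightly enlarged window $[t,(1+\epsilon)t+\tau]$ with $X_r=x_{\mathrm{home}}$. By linearity of expectation that expected count is $\sum_r \mathbb{P}(X_r=x_{\mathrm{home}}\mid T\geq s)$ over $r$ in the window, and Proposition~\ref{prop:fourier} applied at time $r-\lfloor s\rfloor$ bounds each summand by $\frac{2}{\pi(r-s)p\sqrt{3-2p}}+O((r-s)^{-2})$. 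Since the hypothesis $s=\tfrac t2+O(\epsilon t)$ forces $r-s=\tfrac t2+O(\epsilon t)$ for every $r$ in the window, each summand is $\frac{4}{\pi tp\sqrt{3-2p}}+O\left(\tfrac{\epsilon}{t}+t^{-2}\right)$.

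It then remains to count the admissible $r$. Because $X_r=x_{\mathrm{home}}$ can only happen for $r$ of one fixed parity, the number of $r\in[t,(1+\epsilon)t+\tau]$ contributing a possibly nonzero summand is $\tfrac12\epsilon t+O(\tau)$. Multiplying this count by the per-step bound gives $\frac{2\epsilon}{\pi p\sqrt{3-2p}}+O(\epsilon^2+\tfrac\tau t)$ for the expected count -- the cross terms $\epsilon t\cdot O(\epsilon/t)$ and $O(\tau)\cdot O(1/t)$ being absorbed into $O(\epsilon^2+\tau/t)$ -- and dividing by $R_\tau(p)$ yields the claimed inequality.

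Each step is individually routine; the two places I expect to need care are (i) making the restart argument precise, i.e.\ verifying that $\{T\geq s\}$ is measurable with respect to the history up to time $\lfloor s\rfloor$ and that restarting there produces a genuine guided random walk with a legitimate shifted instruction sequence, so that the ``random initial position'' clauses of Propositions~\ref{prop:fourier} and~\ref{prop:hitreturnbound} truly apply; and (ii) the error-term bookkeeping, in particular checking that all cross terms between the $\Theta(\epsilon t)$-length window and the $o(1/t)$ correction of Proposition~\ref{prop:fourier} collapse uniformly into $O(\epsilon^2+\tau/t)$ once $t$ is large enough, with implied constants allowed to depend on $p$ and $\tau$. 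I expect (ii) to be the only genuinely fiddly part, and even that is mechanical.
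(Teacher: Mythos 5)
Your proposal is correct and follows essentially the same route as the paper: restart the walk at time $\lfloor s\rfloor$ (valid since $\{T\geq s\}$ depends only on the history up to that time), bound each $\mathbb{P}(X_r=x_{\mathrm{home}}\mid T\geq s)$ via Proposition~\ref{prop:fourier} using $r-s=\tfrac t2+O(\epsilon t)$, count the $\tfrac12\epsilon t+O(\tau)$ parity-admissible times in $[t,(1+\epsilon)t+\tau)$, and divide by $R_\tau(p)$ via Proposition~\ref{prop:hitreturnbound}. The error-term bookkeeping you flag is handled in the paper exactly as you describe, so no changes are needed.
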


\subsection{Step 4: A Necessary Condition for Finite Expected Hitting Time}
\begin{lemma}\label{lemma: anyhome}
Let $p>0$ be fixed. Suppose there exists a choice of $x_{\mathrm{home}}$ such that $\mathbb{E}[T]=\infty$ for any sequence of instructions. Then, $\mathbb{E}[T]=\infty$ for any choice of $x_{\mathrm{home}}$ apart from the origin.
\end{lemma}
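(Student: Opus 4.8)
\textbf{Proof proposal for Lemma~\ref{lemma: anyhome}.}

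The plan is to argue by contraposition, passing from a hypothetical good strategy for one target to a good strategy for the distinguished bad target, via a translation/coupling argument. Concretely, suppose there is some $x_{\mathrm{home}}' \neq 0$ for which \emph{some} sequence of instructions $\{x_t'\}$ gives $\mathbb{E}[T'] < \infty$, where $T' := \inf\{t : X_t' = x_{\mathrm{home}}'\}$. I want to show that for every other target $y \neq 0$ there is likewise a sequence of instructions achieving finite expected hitting time; the contrapositive of this is exactly the statement of the lemma (if $\mathbb{E}[T] = \infty$ for all instruction sequences for \emph{some} bad $x_{\mathrm{home}}$, then the same holds for every nonzero target).

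The key observation is that a guided random walk is spatially homogeneous in its increments: the law of $X_{t+1} - X_t$ depends only on the instruction increment $x_{t+1} - x_t$, not on the current position. Hence if $\{x_t'\}_{t \geq 0}$ (with $x_0' = 0$) is a good instruction sequence for reaching $x_{\mathrm{home}}'$, then for any vector $v \in \mathbb{Z}^2$ the shifted instruction walk $x_t'' := x_t' + v - $ (a fixed prefix that steers from $0$ to $v$ in $\|v\|_1$ steps, then follows $\{x_t'\}$) produces a guided random walk that reaches $x_{\mathrm{home}}' + v$ in time at most $\|v\|_1 + T'$ in distribution, up to the extra randomness in the prefix — and since the prefix has bounded length $\|v\|_1$, the expected hitting time is still finite. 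Here I would be slightly careful about the prefix: during the first $\|v\|_1$ steps the agent may get unlucky and drift away, but conditioned on any outcome of those steps the remaining process is again a guided random walk from a fixed (random but a.s.\ finite) position $X_{\|v\|_1}''$, and by the same homogeneity one can post-compose with yet another deterministic steering segment back onto the shifted schedule; the total added expected time is at most $O(\mathbb{E}\|X_{\|v\|_1}''\|_1) = O(\|v\|_1)$, which is finite. Choosing $v = y - x_{\mathrm{home}}'$ then yields a good instruction sequence for target $y$, for every nonzero $y$, completing the contrapositive.

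The main obstacle — really the only subtlety — is handling the ``re-synchronisation'' after a deterministic steering segment: the agent following the shifted schedule is not at the position the original analysis assumed, because random faults during the steering segment perturb it. The clean way around this is not to try to hit $x_{\mathrm{home}}' + v$ on the nose with the naive shift, but to use the Markov/strong-homogeneity property as above: after any finite deterministic segment the agent is at some random position $Z$ with $\mathbb{E}\|Z\|_1 < \infty$, and from there we append a deterministic path of length $\|Z - (\text{target of next phase})\|_1$ to return to the intended schedule, at expected cost $O(\mathbb{E}\|Z\|_1)$. Summing a single such correction (we only need finitely many, in fact one) keeps everything finite. One should also note the exclusion of the origin is genuinely needed — if $x_{\mathrm{home}} = 0$ then $T = 0$ trivially and the statement about $\mathbb{E}[T] = \infty$ is vacuous — so the translation argument is only invoked for nonzero targets, which is consistent with the hypothesis that $x_{\mathrm{home}}' \neq 0$.
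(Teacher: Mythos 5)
Your reduction has a genuine gap, and it sits exactly at the point you flag as ``the only subtlety''. The corrective steering segment you append after the prefix is chosen as ``a deterministic path of length $\|Z - (\text{target of next phase})\|_1$'', i.e.\ it depends on the realized position $Z$ of the agent after the prefix. But in this model the instructions are a single sequence fixed in advance (a function of $x_{\mathrm{home}}$ and the start only); the agent never observes whether a fault occurred, so no instruction may depend on $Z$. Even if one ignored this non-adaptivity, the transfer of the guarantee is unjustified: the hypothesis gives finite expected hitting time only for a walk started \emph{exactly} in sync with the schedule $\{x_t'\}$, and once the start is off by even one unit the guarantee says nothing about the hitting time of the shifted target. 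Your claim that ``we only need finitely many, in fact one'' correction does not hold: during the correction segment new faults occur with probability bounded away from zero, so the walk is again desynchronised, and the argument becomes an infinite regress rather than a one-off $O(\|v\|_1)$ cost. Also note that conditioning on ``no fault during the prefix'' does not help, since on the complementary (positive-probability) event the conditional expectation could be infinite, which already makes $\mathbb{E}[T]$ infinite.

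The paper avoids building new instruction sequences altogether and proves the implication directly: take the bad target $x_{bad}$ (necessarily $x_{bad}\neq 0$), any nonzero $x_{\mathrm{home}}$, and \emph{any} candidate instruction sequence for it. Since $p>0$, with positive probability the walk follows a finite path that reaches $x_{\mathrm{home}}-x_{bad}$ at some time $t_{bad}$ while avoiding $x_{\mathrm{home}}$; conditioned on this history, the future is again a guided random walk (following the tail of the candidate instructions) whose target displacement is $x_{bad}$, so by hypothesis its conditional expected hitting time is infinite, whence $\mathbb{E}[T]=\infty$. If you want to keep your contrapositive framing, the same conditioning trick works in reverse: with positive probability the good walk for $x_{\mathrm{home}}'$ sits at $x_{\mathrm{home}}'-y$ at some finite time without having hit $x_{\mathrm{home}}'$, the conditional remaining hitting time is finite because the unconditional expectation is, and the recentred tail of the good sequence is then a good sequence for the target $y$. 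The essential point, which your prefix-and-resynchronise construction misses, is that one only needs the walk to be at the desired spot \emph{with positive probability} and then uses the tail of the given instructions, rather than trying to force the walk deterministically onto a translated copy of the schedule.
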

\begin{proof}
Suppose there exists a starting position $x_{bad}\in \mathbb{Z}^2$ such that $\mathbb{E}[T]=\infty$. Then, for any choice of $x_{\mathrm{home}}$ there exists a time $t_{bad}$ such that, with positive probability, $X_{t_{bad}} = x_{\mathrm{home}}-x_{bad}$ and $X_t\neq x_{\mathrm{home}}$ for all $t<t_{bad}$. But the conditional expected hitting time from this state is infinite, thus $\mathbb{E}[T]=\infty$. 
\end{proof}

\begin{proposition}\label{prop:impossibilityanal}
Suppose, for a given $p>0$ and $x_{\mathrm{home}}\neq 0$, there exists a sequence of instructions such that $\mathbb{E}[T] < \infty$. Then the analytic condition
$$1 \leq \frac{4}{\pi p\sqrt{3-2p} R_\tau(p)}$$
holds for any positive integer $\tau$.
\end{proposition}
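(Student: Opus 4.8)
The plan is to argue by contradiction: assume $\E[T] < \infty$ for some valid choice of $x_{\mathrm{home}}$ and instructions, and deduce that the analytic condition must hold. First I would observe that $\E[T] < \infty$ means $\sum_{t} \Pr(T \geq t) < \infty$, so in particular $\Pr(T \geq t) \to 0$; moreover, the ratio $\Pr(T \geq (1+\epsilon)t) / \Pr(T \geq t)$ cannot stay bounded away from $1$ along a subsequence of $t$'s growing geometrically, since otherwise the tail sum would diverge. Concretely, if we set $t_k = (1+\epsilon)^k t_0$, then $\sum_k t_k \Pr(T \geq t_k) < \infty$ (comparable to $\sum_t \Pr(T\ge t)$ up to constants), which forces $\liminf_k \Pr(T \geq t_{k+1})/\Pr(T \geq t_k) = 0$ — or at least, it forces this ratio not to be uniformly close to $1$. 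So there is a sequence of times $t$ along which $\Pr(T < (1+\epsilon)t \mid T \geq t)$ does not tend to $0$; call its limit inferior $\delta_\epsilon > 0$ along that sequence if it were bounded below, and derive a contradiction.

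The second ingredient is the chain of Propositions \ref{prop:hitconduncond} and \ref{prop:impossibruniceformula}. Choosing $s = t/2 + O(\epsilon t)$, Proposition \ref{prop:hitconduncond} gives
\[
\Pr(T < (1+\epsilon)t \mid T \geq t) \leq \frac{\Pr(X_r = x_{\mathrm{home}} \text{ for some } r \in [t,(1+\epsilon)t) \mid T \geq s)}{\Pr(T \geq t \mid T \geq s)},
\]
and Proposition \ref{prop:impossibruniceformula} bounds the numerator by $\frac{2\epsilon}{\pi p \sqrt{3-2p} R_\tau(p)} + O(\epsilon^2 + \tau/t)$. For the denominator, note $\Pr(T \geq t \mid T \geq s) = 1 - \Pr(s \leq T < t \mid T \geq s)$, and the same Fourier-plus-return machinery (applied on the interval $[s,t)$, which has length $t/2 + O(\epsilon t)$, i.e. $\Theta(t)$) bounds $\Pr(s \leq T < t \mid T \geq s)$ by a quantity that is $O(1)$ times the expected number of hits over $\Theta(t)$ steps divided by $R_\tau(p)$; this is bounded by an absolute constant less than $1$ for $\epsilon$ small — more carefully, one can make it $\frac12 + O(\epsilon)$ or simply note it is $1 - \Omega(1)$ — so the denominator is bounded below by a positive constant $c$ (one can even push $c \to 1$ by a more delicate choice, but it is not needed). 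Hence along our sequence of times,
\[
\delta_\epsilon \;\leq\; \frac{1}{c}\left(\frac{2\epsilon}{\pi p \sqrt{3-2p} R_\tau(p)} + O(\epsilon^2)\right).
\]

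Now I would close the loop. Suppose toward contradiction that $1 > \frac{4}{\pi p \sqrt{3-2p} R_\tau(p)}$, i.e. $\frac{4\epsilon}{\pi p\sqrt{3-2p}R_\tau(p)} < \epsilon$ with room to spare; then for $\epsilon$ small enough, the right-hand side above is at most, say, $\epsilon/2$. Combined with $\delta_\epsilon > 0$ being forced whenever $\E[T]<\infty$ — and quantitatively, the divergence-of-tail argument actually forces $\Pr(T<(1+\epsilon)t \mid T\ge t) \geq \epsilon - o(1)$ along a sequence, because $\sum_k (1+\epsilon)^k \Pr(T\ge t_k)<\infty$ with $\Pr(T\ge t_{k+1}) \geq (1-\delta_k)\Pr(T\ge t_k)$ implies $\prod(1-\delta_k)$ summed geometrically must decay, i.e. $\sum \delta_k = \infty$, so $\delta_k \geq \epsilon$ infinitely often cannot fail by too much — we get $\epsilon/2 \gtrsim \epsilon$, a contradiction for $\epsilon$ small. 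Therefore $1 \leq \frac{4}{\pi p\sqrt{3-2p}R_\tau(p)}$.

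The main obstacle I anticipate is making the ``$\E[T]<\infty$ forces $\Pr(T<(1+\epsilon)t\mid T\ge t)\gtrsim \epsilon$ infinitely often'' step fully rigorous with the right constant, and pairing it cleanly with the denominator bound $\Pr(T \geq t \mid T \geq s) \geq c$. The cleanest route is: if $\Pr(T<(1+\epsilon)t \mid T\ge t) \leq (1-\eta)\epsilon$ for \emph{all} large $t$, then iterating over $t_k=(1+\epsilon)^k$ gives $\Pr(T\ge t_k) \geq \prod_{j<k}(1 - (1-\eta)\epsilon) \gtrsim (1-(1-\eta)\epsilon)^k$, whence $\sum_k (1+\epsilon)^k\Pr(T\ge t_k) \gtrsim \sum_k \big((1+\epsilon)(1-(1-\eta)\epsilon)\big)^k$, and $(1+\epsilon)(1-(1-\eta)\epsilon) = 1 + \eta\epsilon - O(\epsilon^2) > 1$ for small $\epsilon$, so the sum diverges, contradicting $\E[T]<\infty$. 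This pins the threshold exactly at the claimed inequality after letting $\epsilon\to 0$ and $\eta\to 0$, and since $\tau$ was arbitrary the condition holds for every positive integer $\tau$.
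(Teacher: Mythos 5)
Your overall skeleton (geometric time grid, ``if the conditional hitting probability per interval were always below $\approx\epsilon$ then $\E[T]=\infty$'', then feed the spike into Propositions \ref{prop:hitconduncond} and \ref{prop:impossibruniceformula}) matches the paper, and your divergence argument producing arbitrarily large times $t$ with $\Pr(T<(1+\epsilon)t\mid T\ge t)\ge(1-\eta)\epsilon$ is correct. The genuine gap is the denominator bound $\Pr(T\ge t\mid T\ge s)\ge c>0$ with $s\approx t/2$. You claim this follows from ``the same Fourier-plus-return machinery applied on $[s,t)$'', but that machinery does not give it: conditioned on $T\ge s$, the pointwise bound at time $r\in[s,t)$ is of order $\tfrac{1}{p(r-s)}$, and summing over $r-s$ from $1$ to $\Theta(t)$ gives an expected number of hits of order $\log t$, not $O(1)$; after dividing by $R_\tau(p)$ this is still $\Theta(\log t)$, which is useless as an upper bound on $\Pr(s\le T<t\mid T\ge s)$. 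Worse, the statement you need is not even true a priori for an arbitrary strategy with $\E[T]<\infty$: nothing prevents the survival ratio over $[s,t)$ from being tiny at exactly the times where your ``infinitely often'' selection places the spike, since between $s$ and $t$ there may be other failure intervals in which the conditional hitting probability is close to $1$.

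The paper closes precisely this hole by a different selection of $t$: it takes $t=(1+\epsilon)^{k_0}$ where $k_0$ is the \emph{first} index at which the survival inequality \eqref{eq:catchsmall} fails. Then every interval between $s=(1+\epsilon)^{l}\approx t/2$ and $t$ is a non-failure, so $\Pr(T\ge t\mid T\ge s)\ge(1+\epsilon)^{-(k_0-l)}=s/t\ge 1/2$ comes for free from the definition of $k_0$, with no Fourier input. The price is that one must ensure this first failure occurs at a large time so that the $O(\tau/t)$ and $o(1)$ error terms are negligible; this is what Lemma \ref{lemma: anyhome} is for (it lets one assume $x_{\mathrm{home}}$ is far from the origin, forcing $k_0\gg\epsilon^{-1}$). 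If you replace your ``pick any large spike'' step by ``pick the first spike, after pushing home far via Lemma \ref{lemma: anyhome}'', your argument becomes the paper's proof; as written, the denominator step is a gap that the stated tools cannot fill.
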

\begin{proof}
Let $\epsilon>0$ be a sufficiently small constant. By \Cref{lemma: anyhome}, we may, without loss of generality, assume that $x_{\mathrm{home}}$ is sufficiently far from the origin to make the argument below follow through.

For $k=0, 1, \dots$ divide the process up into time intervals $[(1+\epsilon)^k, (1+\epsilon)^{k+1})$. We note that if \begin{equation}\label{eq:catchsmall}
\mathbb{P}( T\geq  (1+\epsilon)^{k+1} \mid T\geq (1+\epsilon)^k) \geq\frac{1}{1+\epsilon}
\end{equation} for all $k$, then $T$ dominates $(1+\epsilon)^{\text{Geom}(1-\frac1{1+\epsilon})}$, which has infinite expectation. Thus, in order for $\mathbb{E}[T]<\infty$ there must exist $k_0$ such that \eqref{eq:catchsmall} is satisfied for all $k=0, 1, \dots k_0-1$, but not for $k=k_0$. By choosing $x_{\mathrm{home}}$ sufficiently far away, we may assume that $k_0$ is much larger than $\epsilon^{-1}$.

Let $t=(1+\epsilon)^{k_0}$ and $s=(1+\epsilon)^l$ where $l$ is the smallest positive integer such that $s\geq \frac{t}{2}$. As $k_0$ corresponds to the first interval where hitting probability is ``high'' we have
$$\mathbb{P}( T \geq t \mid T\geq s) \geq \frac{1}{(1+\epsilon)^{k_0-l} }= \frac{s}{t}\geq \frac{1}{2} \ .$$
Hence, by Propositions \ref{prop:hitconduncond} and \ref{prop:impossibruniceformula}, we have
$$\mathbb{P}(T<(1+\epsilon)t \mid T\geq t) \leq \frac{4\epsilon}{\pi p\sqrt{3-2p}R_\tau(p)} + O\left( \epsilon^2 + \frac{\tau}{t} \right) \ .$$
On the other hand, by assumption of $k_0$, we have
$$\mathbb{P}( T< (1+\epsilon)t \mid T \geq t) > 1-\frac{1}{1+\epsilon} \ .$$
The analytic condition in the statement of the proposition follows by combining these two bounds, and taking the limit as $1\ll \epsilon^{-1} \ll t$.
\end{proof}

% \begin{corollary}
% 	If $p > 0.655\ldots$ \mathbb{E}[T]=\infty$ for any sequence of instructions if $p > 0.655\dots$
% \end{corollary}
% \begin{proof}
% Plug Claim \ref{claim: R30estimate} into Proposition \ref{prop:impossibilityanal} and behold a contradiction.
% \end{proof}

% To wrap up this section, we remark that the bound on $p$ can be directly improved by estimating $R_\tau(p)$ for $\tau>4$ in the same fashion as Claim \ref{claim:R4estimate}. Although tedious to do by hand, extending the analysis to $\tau=30$ implies an infinite expected hitting time whenever $p>0.655\dots$

\section{Finding Home}\label{sec: possibru}%
Throughout this section, we assume that $X_0=0$ and that home is in some vertex at constant distance from the origin.
The goal of this section is to construct a sequence of instructions such that the guided random walk following these instructions has a finite finite expected hitting time to any vertex in $\mathbb{Z}^2$ for any $p < p_0$, where $p_0$ is sufficiently small constant.

Our strategy proceeds in phases, which are performed iteratively until the guided random walk hits the designated vertex in $\mathbb{Z}^2$. 
To simplify the correctness proof, we construct the strategy such that $x_t=0$ at the beginning of each phase. 
Our strategy is an adaptation of the following canonical approach:
At time $t$, fix a box with side length roughly $\sqrt{t}$ around home.
Then, explore every cell in the box and iterate.

\paragraph{Duration of a Phase.}
An issue with the canonical approach is that a phase starting at time $t$ takes roughly $t$ steps to complete.
During the phase, missing home is relatively likely and hence, we end up using a lot of time scanning all of the vertices in the area.
To mend this, in a phase starting at time $t$, we only walk along roughly $\log t$ horizontal lines over a box with side lengths $2A\sqrt{t}$ and $2B\sqrt{t}$ centered around the origin, where $A$ and $B$ are parameters given to the strategy.
By a careful design and choice of these parameters, we can guarantee that the probability of finding home is roughly $(\log t)/\sqrt{t}$.

\subsection{Step 1: The Strategy} \label{sec: strategy}
% The strategy we will use to get grandma home will be divided up into phases. Within each phase we will tell grandma to move in the shape of a thing. This is repeated until eventually grandma is caught.
%
Our strategy is divided into phases and, in each phase, we explore an area (and a few other grid cells) that has a shape of a rectangle.
For simplicity, we will assume that, at the beginning of each phase, we have $x_t=0$. Moreover, we suppose that the agent is likely, say with probability $\Theta(1)$, to be contained in a box $(\pm A \sqrt{t}, \pm B \sqrt{t})$ around the origin.

\subsection{Step 1: The Strategy}
Let $A$ and $B$ be two positive constants to be defined later. For a phase starting at time $t$, we let
\begin{align*}
W&= \left\lceil \frac{1}{1-p_0}A\sqrt{t} \right\rceil,\\
H&= \left\lceil \frac{1}{1-p_0}B\sqrt{t} \right\rceil,\\
G&= \left\lceil \frac{1}{1-p_0}\frac{\sqrt{t}}{\ln t} \right\rceil,\\
N&= \left\lceil \frac{2H}{G} \right\rceil \ .
\end{align*}
We give the following sequence of instructions:
\begin{enumerate}
\item Move $W$ steps west and $H$ steps south to $(-W, -H)$.
\item Move $Z$ steps north where $Z\sim \operatorname{Unif}([G+\lceil t^{1/3}\rceil])$.
\item Form $N$ horizontal lines at horizontal distance $G$ by alternatingly moving $2W$ steps east, $G$ steps north, $2W$ steps west, $G$ steps north, and so on.
\item\label{step (4)} Move $G+\lceil t^{1/3} \rceil-Z$ steps north, and then return to $(0, 0)$.
\end{enumerate}

\begin{remark*}
	In our strategy, Step (4) is performed to make the total number of steps in a phase non-random and hence, the number of steps $Z$ does not bring any extra randomness into our analysis.
\end{remark*}

% A few comments: Step (4) looks like that to ensure that the total number of steps done in a phase is non-random. The most costly part of our approach is trazing all horizontal lines, and therefore the phase takes $(2+o(1))N\cdot W=\frac{(4+o(1)) A\cdot B\cdot \sqrt{t}\ln t}{1-p_0}$ time. Moreover, a $1-o(1)$ fraction goes in west-east direction. This is intentional - reason being that moving in a direction, will increase variance more parallel to the direction than orthogonal to it. Thus keeping the variance as small as possible along one axis (this case north-south) will minimize the area where home is likely to be at.

\subsection{Step 2: Analyzing a Single Iteration}

\begin{proposition}\label{prop:phasegoodifgrandmanear} Fix $p<p_0$. Suppose we start a phase at time $t>0$. Condition on the initial position $X_t$ of the agent at the beginning of the phase being inside the box $(\pm A\sqrt{t}, \pm B\sqrt{t})$. Then with probability at least $\frac{1-o(1)}{G}$, the agent visits home at some point during the phase.
\end{proposition}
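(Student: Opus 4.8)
Condition throughout on the value of $X_t$, which we assume lies in the box $(\pm A\sqrt t,\pm B\sqrt t)$; every $o(1)$ below is as $t\to\infty$, uniform over such $X_t$. Let $S$ be the (deterministic) number of steps in the phase; a direct count gives $S=\Theta(\sqrt t\,\ln t)$. During the phase, $X_{t+s}=X_t+z_s+\Xi_s$, where $z_s$ is the nominal position prescribed by the instructions and $\Xi_s=\sum_{j\le s}\Delta_j$ with the $\Delta_j$ independent, bounded by $2$, $\E[\Delta_j]=-p\,v_j$ ($v_j$ the $j$th instructed unit step), $\Cov(\Delta_j)=p\bigl(\tfrac12 I+(1-p)v_jv_j^{\top}\bigr)$, and $\Xi_s$ supported on the even-coordinate-sum sublattice. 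Set $z^{\ast}:=(x_{\mathrm{home}}-X_t)/(1-p)$, the nominal point the agent should pass over to hit home. Since $|x_{\mathrm{home}}|=O(1)$, $X_t$ lies in the box, and $\tfrac1{1-p}<\tfrac1{1-p_0}$, the point $z^{\ast}$ lies strictly inside the rectangle swept in Step (3): its distance to the sides $x=\pm W$ and to the extreme swept heights exceeds a constant (depending on $p,p_0$) times $\sqrt t$, hence exceeds the fluctuation scale $\sqrt{pS}=\Theta(t^{1/4}\sqrt{\ln t})$ by a polynomial factor.

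\smallskip\noindent\emph{Step A: the expected number of visits.} The plan is to show $\E[\,\#\{s\le S:X_{t+s}=x_{\mathrm{home}}\}\,]\ge (1-o(1))/((1-p)^2G)\ge(1-o(1))/G$, counting only $s$ in the $N$ horizontal sweeps of Step (3). On sweep $i$ the conditional mean $\mu_s=X_t+(1-p)z_s$ has a fixed second coordinate $\bar h_i$ and a first coordinate moving at speed $1-p$ across a range containing $x_{\mathrm{home}}^{(1)}$ with the margin above. A local central limit theorem for $\Xi_s$ — valid since the increments are independent, uniformly bounded, uniformly nondegenerate and $s\to\infty$ — yields $\Pr(X_{t+s}=x_{\mathrm{home}})=\tfrac{2}{2\pi\sqrt{\det\Sigma_s}}\exp\!\bigl(-\tfrac12(x_{\mathrm{home}}-\mu_s)^{\top}\Sigma_s^{-1}(x_{\mathrm{home}}-\mu_s)\bigr)(1+o(1))$ for admissible $s$, with $\Sigma_s=\sum_{j\le s}\Cov(\Delta_j)$ essentially diagonal. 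Summing over $s$ in sweep $i$: the horizontal Gaussian contributes $\tfrac{\sqrt{2\pi}\,\sigma_{x,i}}{2(1-p)}$ (the $\tfrac12$ from parity, the $1-p$ converting steps to displacement), the factor $\sigma_{x,i}$ cancels against $\sqrt{\det\Sigma_s}=\sigma_{x,i}\sigma_{y,i}$, and the sweep contributes $\tfrac{1}{\sqrt{2\pi}(1-p)\sigma_{y,i}}\exp(-d_i^2/2\sigma_{y,i}^2)$, where $d_i:=\bar h_i-x_{\mathrm{home}}^{(2)}$.

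\smallskip\noindent\emph{Step B: averaging over $Z$.} The uniform offset $Z\in\{1,\dots,G+\lceil t^{1/3}\rceil\}$ from Step (2) shifts every $d_i$ by $1-p$, so for fixed $i$ the $d_i(Z)$ form an arithmetic progression of spacing $1-p$ over an interval of length $(1-p)(G+\lceil t^{1/3}\rceil)\ge(1-p)G$; these intervals tile a neighbourhood of $0$, only $O(1)$ of them meet $(-\,O(\sigma_{y,i}),O(\sigma_{y,i}))$, and the slack $\lceil t^{1/3}\rceil$ — which is $\omega(\sigma_{y,i}/(1-p))$ but $o(G)$ — guarantees one such interval captures the whole Gaussian $\int e^{-d^2/2\sigma_{y,i^{\ast}}^2}\mathrm dd=\sqrt{2\pi}\,\sigma_{y,i^{\ast}}$ up to edge loss $o(1)$, for the aligned sweep $i^\ast$. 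The relevant sweeps being consecutive, their $\sigma_{y,i}$ agree up to $1+o(1)$, so averaging the Step A contribution over $Z$ gives $\tfrac{1}{\sqrt{2\pi}(1-p)\sigma_{y,i^{\ast}}}\cdot\tfrac{\sqrt{2\pi}\sigma_{y,i^{\ast}}}{(1-p)(G+\lceil t^{1/3}\rceil)}(1-o(1))=\tfrac{1-o(1)}{(1-p)^2G}$, using $G+\lceil t^{1/3}\rceil=(1+o(1))G$. (This is where the $\lceil t^{1/3}\rceil$ in Step (2) earns its keep; the added steps in Step (4) merely restore determinism of $S$.)

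\smallskip\noindent\emph{Step C: from visits to a visit — the main obstacle.} Let $p_{\ast}$ be the probability of at least one visit during the phase. Conditioning on the first visit time $t+s'$, the strong Markov property gives $\E[\#\text{visits}]=p_{\ast}\bigl(1+\E[\#\{u\ge1:X_{t+s'+u}=x_{\mathrm{home}}\}]\bigr)$, so it suffices that the conditional expected number of \emph{later} visits is $o(1)$. Here the bound from \Cref{prop:fourier} alone is far too weak (it gives $O(\ln t/p)$); instead one exploits the geometry of the instructions. A later visit forces $\Xi_{s'+u}-\Xi_{s'}=z_{s'}-z_{s'+u}$, and along the Step (3) path — monotone within each segment, with consecutive horizontal sweeps separated vertically by $G$ — one has $\|z_{s'+u}-z_{s'}\|_1\ge\min(u,G)$ for every $u\ge1$. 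Thus the required deviation of $\Xi_{s'+u}-\Xi_{s'}$ from its mean has norm $\Omega(\min(u,G))$: for bounded $u$ this forces $\Omega(u)$ error steps, contributing $\sum_u p^{\Omega(u)}=o(1)$; for larger $u$ a Hoeffding bound gives probability $e^{-\Omega(\min((1-p)^2G^2/u,\,u))}$, and since $(1-p)^2G^2/S$ is a positive power of $t$, the sum over $u\le S$ is $o(1)$ (Steps (1),(2),(4) are treated identically, being monotone). Hence $p_{\ast}\ge\E[\#\text{visits}]/(1+o(1))\ge(1-o(1))/G$. The genuine difficulties are (i) making the constant in Steps A–B exact — no stray factor $2$, and $G+\lceil t^{1/3}\rceil$ collapsing to $G$ — and (ii) rendering Step C fully explicit, including the behaviour when $X_t$ sits near a corner of the box, which is handled by the strict inequality $p<p_0$ supplying the $\Theta(\sqrt t)$ margin used throughout.
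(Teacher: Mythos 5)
Your Steps A--B (local CLT plus averaging over $Z$, giving $\E[\#\text{visits}]\ge (1-o(1))/((1-p)^2G)$) are a reasonable, genuinely different route from the paper, which instead conditions on all the random displacements $U$, $V$, $W_s$ concentrating within $o(t^{1/3})$ (Azuma) and then argues purely geometrically that, on this event, \emph{at least one} value of the uniform offset $Z$ steers the agent exactly onto home, so the hit probability is at least $\frac{1-o(1)}{G+\lceil t^{1/3}\rceil}$ with no second-moment bookkeeping at all. But your Step C contains a genuine error: the conditional expected number of \emph{later} visits after the first visit is not $o(1)$ as $t\to\infty$; it is bounded below by a constant of order $p$. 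Concretely, if the first visit occurs mid-sweep with instructions ``east, east, \dots'', the walk returns to the same cell two steps later with probability about $2\cdot\frac{p}{4}(1-\frac{3p}{4})+2(\frac{p}{4})^2\approx p/2$ (one compensating error step suffices). Your own displayed estimate $\sum_u p^{\Omega(u)}$ is a constant depending on $p$, not an $o(1)$ quantity, and the same is true of the tail sum $\sum_u e^{-\Omega(\min((1-p)^2G^2/u,\,u))}$. Hence what Step C actually yields is $p_*\ge \E[\#\text{visits}]/(1+\rho(p))$ with $\rho(p)=\Theta(p)$, i.e.\ $p_*\ge\frac{1-o(1)}{(1+\rho(p))(1-p)^2G}$, and the stated conclusion $p_*\ge\frac{1-o(1)}{G}$ only follows if you additionally prove the quantitative inequality $(1+\rho(p))(1-p)^2\le 1$ for all $p<p_0$. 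That is plausible (the $(1-p)^{-2}$ gain from Step B should beat $\rho(p)\approx p/2$ for small $p$), but it requires an explicit upper bound on the expected number of returns of a guided random walk along straight-line instructions --- a computation in the spirit of the paper's $R_\tau(p)$ in \Cref{prop:R4estimate}, only as an upper rather than lower bound --- which you have not supplied.

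A secondary inaccuracy: the claim $\|z_{s'+u}-z_{s'}\|_1\ge\min(u,G)$ ``for every $u\ge1$'', with Steps (1), (2), (4) ``treated identically, being monotone'', is false for the concatenated path, because the nominal trajectory is not self-avoiding. The northward Step (2) re-traverses the column just descended in Step (1), and the ``return to $(0,0)$'' leg of Step (4) re-crosses the entire swept rectangle, so there are pairs $(s',s'+u)$ with large $u$ but nominal displacement zero (or $O(1)$). These contributions can be controlled (the first visit is unlikely to land on the soon-backtracked portions, and each re-crossing contributes only $O(1/\sigma)$ expected revisits), but they must be argued separately rather than subsumed under monotonicity. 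Neither issue dooms the approach, but as written the proof does not establish the proposition's $\frac{1-o(1)}{G}$ bound.
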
%\jatodo{Is this true for all $t$ or do we need to assume that $\sqrt{t} > x_{\mathrm{home}}$?}

\paragraph{Notation.}
Before formulating the proof we introduce some notation. Let $U$ be the vector denoting the actual translation that the agent performs during $(1)$, $V_Z$ the translation during $(2)$, and for any $0\leq s \leq (N-1)G+2N\cdot W$ let $W_s$ be the actual translation performed by the agent during the first $s$ steps of $(3)$. Note that $U$, $V_{Z}$ and $\{W_s\}$ are independent.

Our proof will consist of a coupling argument over the different values of $Z$. In doing so, we extend $V_Z$ to a guided random walk  $V_0, V_1, \dots, V_{G+\lceil t^{1/3} \rceil}$ starting in the origin, and where the directions are to always move north. We will show that, with high probability, there is at least one choice of $Z$ that would make the agent get home during the phase.

\begin{lemma}
Let $\{Y_s\}_{s=0}^r$ be any $r$-step guided random walk. Then, with high probability, $\| Y_s - \mathbb{E} [Y_s]\|_\infty \leq 4\sqrt{r}\ln r$ for all $0\leq s \leq r$, where $\|(a, b)\|_\infty := \max(\abs{a}, \abs{b})$.

In particular, with high probability, $U$, $V_0, V_1, \dots$ and $W_0, W_1, \dots$ all deviate from their expectations by at most $o(t^{-1/3})$.
\end{lemma}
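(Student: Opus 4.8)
The plan is to observe that a guided random walk is, by definition, a sum of \emph{independent} and uniformly bounded increments, so that each coordinate of $Y_s-\E[Y_s]$ is a centred sum of bounded independent random variables to which a Hoeffding/Azuma estimate applies; a union bound over the $O(r)$ values of $s$ then upgrades a pointwise tail bound to the claimed uniform statement.

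In detail, I would write $Y_s=\sum_{j=1}^s D_j$ with $D_j:=Y_j-Y_{j-1}$. Since the instruction sequence is fixed in advance, the only randomness at step $j$ is the independent coin deciding whether the agent follows its instruction or instead moves in a uniformly random cardinal direction; hence the increments $D_1,D_2,\dots$ are independent and each satisfies $\|D_j\|_\infty\le 1$. Fixing a coordinate $c\in\{1,2\}$, the quantity $Y_s^{(c)}-\E[Y_s^{(c)}]=\sum_{j=1}^s\bigl(D_j^{(c)}-\E[D_j^{(c)}]\bigr)$ is then a sum of independent mean-zero variables, each taking values in an interval of length at most $2$. Hoeffding's inequality gives, for every fixed $0\le s\le r$,
$$
\Pr\!\left(\bigl| Y_s^{(c)}-\E[Y_s^{(c)}] \bigr|\ \ge\ 4\sqrt{r}\,\ln r\right)\ \le\ 2\exp\!\left(-\frac{16\,r\,(\ln r)^2}{2s}\right)\ \le\ 2\exp\!\left(-8(\ln r)^2\right),
$$
where the last inequality uses $s\le r$. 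This tail is $r^{-\omega(1)}$, so summing over the at most $2(r+1)$ pairs $(s,c)$ shows that $\|Y_s-\E[Y_s]\|_\infty> 4\sqrt{r}\,\ln r$ for some $0\le s\le r$ only with probability $r^{-\omega(1)}=o(1)$, which proves the first claim.

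For the ``in particular'' part I would simply substitute the relevant walk lengths. The walk $U$ of Step~(1) has $r=W+H=\Theta(\sqrt{t})$ steps; the extended walk $V_0,\dots,V_{G+\lceil t^{1/3}\rceil}$ has $G+\lceil t^{1/3}\rceil=\Theta(\sqrt t/\ln t)$ steps, since $t^{1/3}=o(\sqrt t/\ln t)$; and $W_0,W_1,\dots$ runs for at most $(N-1)G+2NW=\Theta(\sqrt t\,\ln t)$ steps. In each case $4\sqrt{r}\,\ln r=O\!\left(t^{1/4}(\ln t)^{3/2}\right)=o(t^{1/3})$, so applying the first part to each of the three independent processes and taking a union bound over the three resulting failure events gives the conclusion.

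The only point needing a little care is that the bound must hold \emph{simultaneously for all} $s\in\{0,1,\dots,r\}$, not merely for a single $s$; this is exactly why the deviation threshold is $4\sqrt r\,\ln r$ rather than the ``natural'' $\Theta(\sqrt r)$ — the extra $\ln r$ factor makes each individual tail probability superpolynomially small, so the union bound over the $O(r)$ times costs essentially nothing. (Alternatively one could invoke Kolmogorov's or Doob's maximal inequality and save the logarithm, but the crude union bound is more than sufficient for the intended application.) Everything else is a routine Chernoff-type computation, so I do not expect any genuine obstacle here.
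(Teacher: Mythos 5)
Your proof is correct and follows essentially the same route as the paper's: an Azuma/Hoeffding-type bound for the independent, bounded increments (with the extra $\ln r$ factor absorbing the union bound over the $O(r)$ times $s$), followed by plugging in that $U$, the extended $V$, and $W$ are guided walks of at most $O(\sqrt{t}\ln t)$ steps, giving deviations $O(t^{1/4}(\ln t)^{3/2})=o(t^{1/3})$. (The ``$o(t^{-1/3})$'' in the lemma statement is a typo for $o(t^{1/3})$, which is exactly what both your computation and the paper's own proof deliver.)
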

\begin{proof}
The first part is a direct consequence of Azuma's inequality, or a suitable version of a standard Chernoff bound. The second statement is a consequence of the first, observing that the random variables can be expressed as three guided random walks with $O(\sqrt{t}\ln t)$ steps, yielding a maximum deviation of $O(t^{1/4}\ln t)=o(t^{1/3})$ from their respective maximum values.
\end{proof}
\begin{proof}[Proof of Proposition \ref{prop:phasegoodifgrandmanear}.]
Let $U, \{V_s\}$ and $\{W_s\}$ be as above, and condition on the event that these never deviate in max-norm from their expectations by more than $o(t^{1/3})$. We show that, under this condition, and assuming $X_t$ is contained in $(\pm A\sqrt{t}, \pm B\sqrt{t})$ there is at least one choice of $Z$ by which the agent would get home at some point during the phase.
For an illustration, refer to \Cref{fig: box}.
\begin{figure}
	\begin{minipage}[b]{0.5\linewidth}
		\centering
		\includegraphics[width=.95\linewidth]{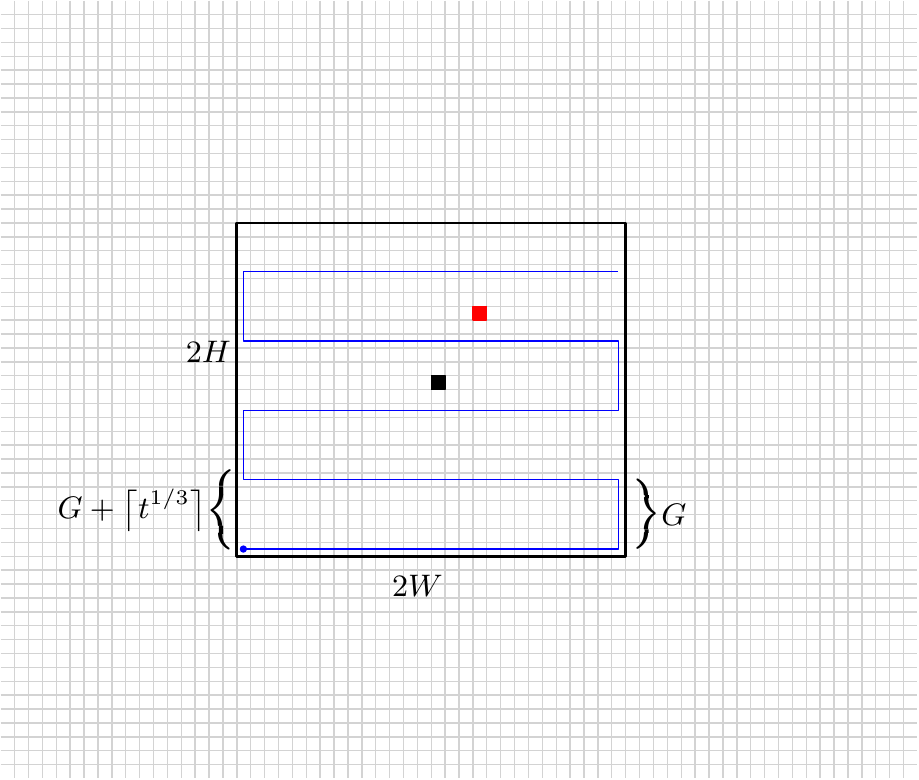}
% 		\subcaption{asd}
	\end{minipage}%
	\begin{minipage}[b]{0.5\linewidth}
		\centering
		\includegraphics[width=.95\linewidth]{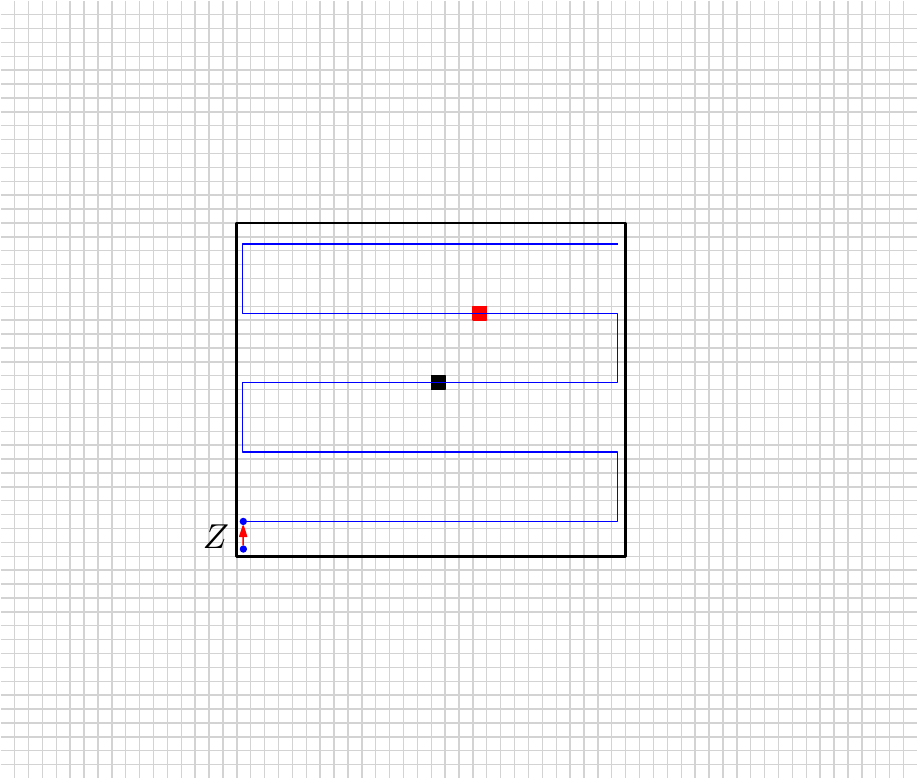}
% 		\subcaption{asd}
	\end{minipage}
	\caption{In the beginning of a phase, starting at time $t_i$, the agent (illustrated by a small blue circle) is told to go to the southwest corner of the box centered at the origin.
The blue zigzag illustrates the $N$ lines of width $2W$. The $Z$ random steps are intentionally left out in the picture on the left. The red square illustrates the point $\frac{1}{1-p}(x_{\textrm{home}}-X_{t_i})$. The significance of this point is that, whenever the instructions go to this point, then the \emph{actual} position of the agent will be close to $x_\textrm{home}$. As this point depends on $X_{t_0}$, it is unknown to the algorithm, but under the assumption that $X_{t_0}$ is not too far away from the origin, this point will be inside the box. Thus by sweeping the box, we are likely to pass close to it at some point.
In the picture on the right, a random choice of $Z$ that leads the agent home is shown by the red arrow. For simplicity, the drift induced by the random steps is left out of the illustrations.}
	\label{fig: box}
\end{figure}

For the claim to hold, it suffices to show that there exist a $z\in [ G+\lceil t^{1/3}\rceil]$ such that
$$U + W_s = x_{\mathrm{home}} - X_t - V_z$$
for some $0\leq s \leq (N-1)G+2N\cdot W$. Consider the set of points the left- and right-hand sides of this expression covers as $s$ and $z$ goes over the respective intervals. In particular, the left-hand side contains $N$ mostly horizontal lines going between $x$-coordinates $$\pm \frac{1-p}{1-p_0} A\sqrt{t} \pm o(t^{1/3})$$ and with $y$-coordinates $$-\frac{1-p}{1-p_0} \left(B\sqrt{t} + i \frac{\sqrt{t}}{\ln t}\right) \pm o(t^{1/3})$$ for each $i=0, 1, \dots, N-1$. In particular, the last line has $y$-coordinate at least $\frac{1-p}{1-p_0}\left( B\sqrt{t} - \frac{\sqrt{t}}{\ln t}\right) \pm o(t^{-1/3}).$

On the other hand, the right-hand side has $x$-coordinates contained in the slightly smaller interval of $\pm A\sqrt{t}\pm o(t^{1/3})$, and with its highest $y$-coordinate contained inside $\pm B\sqrt{t} + o(t^{1/3})$ and its highest and lowest $y$-coordinate differing by $\frac{1-p}{1-p_0} \frac{\sqrt{t}}{\ln t} + (1-p)t^{1/3} + o(t^{1/3})$. As this difference is larger than the gap $\frac{1-p}{1-p_0} \frac{\sqrt{t}}{\ln t}  + o(t^{1/3})$ between two horizontal lines, there must be a $z$ where the two sets intersect, as desired.
\end{proof}

\subsection{Step 3: How Far is the Guided Random Walk from the Origin?}

With \Cref{prop:phasegoodifgrandmanear} at hand, our next goal is to get a handle on how far from the origin we can expect the agent to be at the beginning of a phase, conditioned on home not being found before this phase. 
Fix a parameter $\alpha>0$ and let $t_i$ denote the starting time of phase $i$. To simplify the analysis, we will assume that
\begin{equation}\label{eq:cap}
	\mathbb{P}(T\leq t_{i+1} \mid T>t_i) \leq \alpha \frac{t_{i+1}-t_i}{t_i} \ .
\end{equation}
\paragraph{Model Modification.}
To ensure this bound, we modify the model in the following way.
Every time the agent is about to hit home, there is a small probability that the agent does not ``realize'' that home is found, and simply moves normally in the next step.
Clearly, any such modification would increase $T$, hence any upper bound to $T$ in the modified model would immediately yield the same bound for the original model.

For each cardinal direction, i.e., North, East, South and West, we define
\begin{align*}
m_i^N &= \mathbb{E}\left[ e^{\frac{\lambda}{\sigma_2 \sqrt{t_i}}X^2_{t_i}} \mid T>t_i \right],\\
m_i^E &= \mathbb{E}\left[ e^{\frac{\lambda}{\sigma_1 \sqrt{t_i}}X^1_{t_i}} \mid T>t_i \right],\\
m_i^S &= \mathbb{E}\left[ e^{-\frac{\lambda}{\sigma_2 \sqrt{t_i}}X^2_{t_i}} \mid T>t_i \right],\\
m_i^W &= \mathbb{E}\left[ e^{-\frac{\lambda}{\sigma_1 \sqrt{t_i}}X^1_{t_i}} \mid T>t_i \right],\\
\end{align*}
where $\sigma_1=\sqrt{\frac{p}{2}\left(3-2p\right)}$ and $\sigma_2=\sqrt{\frac{p}{2}}$. That is, $\sigma_1$ is the standard deviation of one step of a guided random walk, parallel to the given direction, and $\sigma_2$ is the standard deviation in the orthogonal direction.

\begin{proposition}\label{prop:chernoffbndgrandma} Let $\alpha>0$ be as above, and let $a>0$ be any positive constant. Conditioned on $T>t_i$, the probability that $X_{t_i}$ is outside the box $(\pm a\sigma_1 \sqrt{t_i}, \pm a\sigma_2 \sqrt{t_i})$ is $4e^{-\frac{1}{4}a^2 + 2\alpha + o(1)}$.
\end{proposition}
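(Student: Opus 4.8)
The plan is to bound, for each cardinal direction, the conditional moment generating functions $m_i^N,m_i^E,m_i^S,m_i^W$ defined above the statement, and feed them into an exponential Markov inequality. By a union bound over the four sides of the box it suffices to show, say, $\mathbb{P}(X^1_{t_i}>a\sigma_1\sqrt{t_i}\mid T>t_i)\le e^{-a^2/4+2\alpha+o(1)}$ together with the three analogous bounds, and for any $\lambda>0$,
$$\mathbb{P}(X^1_{t_i}>a\sigma_1\sqrt{t_i}\mid T>t_i)\;\le\;e^{-\lambda a}\,m_i^E.$$
So the whole problem reduces to proving $m_i^E\le e^{\lambda^2+2\alpha+o(1)}$ (and the same for the other three directions), after which choosing $\lambda=a/2$ makes the exponent $-\lambda a+\lambda^2$ equal to $-a^2/4$.

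The heart of the argument will be a recursion relating $m_{i+1}^E$ to $m_i^E$. Writing $n_i:=t_{i+1}-t_i$ and $r_i:=n_i/t_i$, decompose $X^1_{t_{i+1}}=X^1_{t_i}+D$, where $D$ is the horizontal displacement accumulated during phase $i$. Since each phase starts and ends with the instruction position at the origin, the net instructed displacement over a phase is $\vec 0$, so $\mathbb{E}[D]=0$; and since $D$ depends only on the random bits used during phase $i$, it is independent of the history $\mathcal F_{t_i}$, hence also of the event $\{T>t_i\}\in\mathcal F_{t_i}$ (in the modified model the agent follows the same instructions whether or not it has already hit home). I would then chain three estimates. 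First, $\{T>t_{i+1}\}\subseteq\{T>t_i\}$ together with the survival bound \eqref{eq:cap} gives $m_{i+1}^E\le\frac{1}{1-\alpha r_i}\,\mathbb{E}\bigl[e^{\frac{\lambda}{\sigma_1\sqrt{t_{i+1}}}X^1_{t_{i+1}}}\mid T>t_i\bigr]$. Second, the independence of $D$ factorises this as $\mathbb{E}\bigl[e^{\frac{\lambda}{\sigma_1\sqrt{t_{i+1}}}X^1_{t_i}}\mid T>t_i\bigr]\cdot\mathbb{E}\bigl[e^{\frac{\lambda}{\sigma_1\sqrt{t_{i+1}}}D}\bigr]$; since $\sqrt{t_i/t_{i+1}}\le 1$, Jensen's inequality applied to $z\mapsto z^{\sqrt{t_i/t_{i+1}}}$ bounds the first factor by $(m_i^E)^{\sqrt{t_i/t_{i+1}}}$. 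Third, a one-phase Chernoff estimate handles the second factor: Taylor-expanding the per-step moment generating functions and using $\mathbb{E}[D]=0$ gives $\mathbb{E}[e^{\mu D}]\le\exp\bigl(\tfrac12\mu^2\Var(D)(1+o(1))\bigr)$ with $\mu=\frac{\lambda}{\sigma_1\sqrt{t_{i+1}}}$, and a phase consists of $(1+o(1))n_i$ horizontal sweep steps (each contributing variance $\sigma_1^2$ to $X^1$) together with only $O(\sqrt{t_i})=o(n_i)$ vertical steps, so $\Var(D)=(1+o(1))\sigma_1^2 n_i$ and hence $\tfrac12\mu^2\Var(D)=\tfrac{\lambda^2}{2}r_i(1+o(1))$. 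Taking logarithms, using $\sqrt{t_i/t_{i+1}}=1-\tfrac{r_i}{2}+O(r_i^2)$ and $-\ln(1-\alpha r_i)=\alpha r_i+O(r_i^2)$, and writing $\phi_i:=\ln m_i^E$, this yields
$$\phi_{i+1}\;\le\;\Bigl(1-\tfrac{r_i}{2}\Bigr)\phi_i+\Bigl(\alpha+\tfrac{\lambda^2}{2}\Bigr)r_i+o(r_i)+O(r_i^2)\bigl(1+|\phi_i|\bigr).$$
The identical recursion holds for the $W,N,S$ versions; for $N,S$ the orthogonal per-step variance $\sigma_2^2$ appears instead of $\sigma_1^2$, but after the rescaling by $\sigma_2$ built into $m_i^N,m_i^S$ the coefficient is unchanged.

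It then remains to solve this. From the definitions of $W,H,G,N$ one checks that each phase has length $n_i=\Theta(\sqrt{t_i}\ln t_i)$ (dominated by the horizontal sweeps), hence $r_i=\Theta(\ln t_i/\sqrt{t_i})\to 0$ while $\sum_i r_i\ge\sum_i\ln(t_{i+1}/t_i)=\infty$ and $\sum_i r_i^2<\infty$. A short simultaneous induction over $i$ and the four directions --- using $m_i^E m_i^W\ge 1$ (Cauchy--Schwarz) and the analogous $N/S$ inequality to turn the upper bounds into two-sided bounds --- shows every $\phi_i$ stays bounded, so the $O(r_i^2)(1+|\phi_i|)$ terms are summable. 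A standard fact about affine recursions $\phi_{i+1}\le(1-c_i)\phi_i+c_i\psi_i+\epsilon_i$ with $c_i:=r_i/2\in(0,1)$, $\sum c_i=\infty$, $\psi_i\to\lambda^2+2\alpha$ and $\sum|\epsilon_i|<\infty$ then forces $\limsup_i\phi_i\le\lambda^2+2\alpha$: the homogeneous factor $\prod_j(1-c_j)$ vanishes, so the initial value and the summable perturbation wash out. Hence $m_i^E\le e^{\lambda^2+2\alpha+o(1)}$, likewise for the other directions, and plugging $\lambda=a/2$ into the Markov bound and summing the four contributions gives the claimed $4e^{-a^2/4+2\alpha+o(1)}$.

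I expect the main obstacle to be the recursion step, i.e.\ getting the right constants while juggling the conditioning: one must combine the multiplicative loss $\tfrac{1}{1-\alpha r_i}$ coming from conditioning on survival with the Jensen rescaling factor $\sqrt{t_i/t_{i+1}}<1$. It is precisely this rescaling that makes the recursion converge at all --- without it $\phi_i$ would accumulate, since $\sum r_i=\infty$ --- and it is also responsible for a factor-$2$ loss in the exponent ($a^2/4$ rather than the $a^2/2$ one would expect from the per-phase variance alone). The secondary point needing care is the variance bookkeeping $\Var(D)=(1+o(1))\sigma_1^2 n_i$ (respectively $(1+o(1))\sigma_2^2 n_i$ for the orthogonal coordinate), which relies on each phase being dominated, by a factor $\Theta(\ln t)$, by its horizontal sweep steps.
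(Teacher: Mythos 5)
Your proposal is correct and follows essentially the same route as the paper: the same direction-wise conditional exponential moments $m_i^c$, the same three-part recursion (survival cost from Inequality~\eqref{eq:cap}, independence of the phase increment plus Jensen's rescaling, and a per-phase Chernoff bound with variance dominated by the horizontal sweep steps), the same fixed point $e^{\lambda^2+2\alpha+o(1)}$, and finally Markov with $\lambda=a/2$ and a union bound over the four directions. The only difference is cosmetic: the paper solves the recursion multiplicatively via the concavity bound $\sqrt{t_i/t_{i+1}}\leq 1-\tfrac12\tfrac{t_{i+1}-t_i}{t_{i+1}}$ and a direct induction, which sidesteps the additive $O(r_i^2)(1+|\phi_i|)$ error terms and the boundedness argument you introduce in the logarithmic formulation.
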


Before proving this proposition, we need some ground work.

\begin{claim}\label{claim:genfctrec}
For any $c\in\{N, S, E, W\}$, we have 
$$m_{i+1}^c \leq \left(m_i^c\right)^{\sqrt{\frac{{t_i}}{{t_{i+1}}}}}e^{\left(\frac{1}{2} \lambda^2 + \alpha + o(1)\right)\frac{t_{i+1}-t_i}{t_{i+1}}}$$
\end{claim}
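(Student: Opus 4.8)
The plan is to peel off one phase at a time, using that the only dependence between phase $i$ and the past lies in the event $\{T>t_i\}$, which is measurable with respect to the history $\cF_{t_i}$ up to time $t_i$. Fix a direction, say $c=N$ (the remaining three are identical after relabelling the coordinate and/or flipping the sign of $\lambda$), and abbreviate $\beta_i:=\lambda/(\sigma_2\sqrt{t_i})$, $Y:=X^2_{t_i}$, $Z:=X^2_{t_{i+1}}$, and $\Delta:=Z-Y$, the north--south displacement accumulated during phase $i$. Since the phase-$i$ instruction walk is a closed walk from $0$ to $0$ of deterministic length $n:=t_{i+1}-t_i$ (this is exactly what Step~(4) guarantees), $\Delta$ is a function of the fresh randomness of phase $i$ only -- the in-phase mistakes and the Step~(2) uniform variable -- and hence is independent of $\cF_{t_i}$. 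I will bound $m^N_{i+1}$ by a product of three factors coming respectively from re-conditioning $\{T>t_{i+1}\}$ down to $\{T>t_i\}$, from rescaling $\beta_{i+1}=\beta_i\sqrt{t_i/t_{i+1}}$, and from the increment $\Delta$.

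For the first factor, $\{T>t_{i+1}\}\subseteq\{T>t_i\}$ and $e^{\beta_{i+1}Z}>0$ give $\E[e^{\beta_{i+1}Z}\mathbbm{1}_{\{T>t_{i+1}\}}]\le\E[e^{\beta_{i+1}Z}\mathbbm{1}_{\{T>t_i\}}]$, while \eqref{eq:cap} gives $\Pr(T>t_{i+1})\ge\Pr(T>t_i)\,(1-\alpha n/t_i)$; dividing,
\[
m^N_{i+1}\ \le\ \frac{\E[e^{\beta_{i+1}Z}\mid T>t_i]}{1-\alpha n/t_i}\ \le\ e^{(\alpha+o(1))\,n/t_{i+1}}\ \E[e^{\beta_{i+1}Z}\mid T>t_i],
\]
where we used $1/(1-x)\le e^{x+O(x^2)}$ and the fact that a phase started at time $t$ has length $\Theta(\sqrt t\,\ln t)$, so $n/t_i=(1+o(1))\,n/t_{i+1}$. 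Next, as $\{T>t_i\}\in\cF_{t_i}$, $Y$ is $\cF_{t_i}$-measurable and $\Delta\perp\cF_{t_i}$, conditioning on $\cF_{t_i}$ and then on $\{T>t_i\}$ factorises $\E[e^{\beta_{i+1}Z}\mid T>t_i]=\E[e^{\beta_{i+1}\Delta}]\cdot\E[e^{\beta_{i+1}Y}\mid T>t_i]$; and since $q:=\sqrt{t_i/t_{i+1}}\in(0,1]$, concavity of $x\mapsto x^q$ (Jensen) gives $\E[e^{\beta_{i+1}Y}\mid T>t_i]=\E[(e^{\beta_i Y})^q\mid T>t_i]\le(m^N_i)^{q}$.

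It remains to prove $\E[e^{\beta_{i+1}\Delta}]\le e^{(\tfrac12\lambda^2+o(1))\,n/t_{i+1}}$. Write $\Delta=\sum_t D^2_t$ over the $n$ steps of the phase; conditionally on the (possibly random) instruction sequence these increments are independent, so it suffices to bound $\E[e^{\beta_{i+1}\Delta}\mid\text{instructions}]$ uniformly. A step whose instruction is east or west has $D^2_t\in\{-1,0,1\}$ with $\Pr(D^2_t=\pm1)=p/4$, so $\log\E[e^{\beta_{i+1}D^2_t}]=\tfrac p4\beta_{i+1}^2+O(\beta_{i+1}^4)=\tfrac{\lambda^2}{2t_{i+1}}+O(t_{i+1}^{-2})$ (using $\sigma_2^2=p/2$), with no first-order term; a step whose instruction is north or south contributes a first-order term $\pm(1-p)\beta_{i+1}$ plus $O(\beta_{i+1}^2)$. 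Summing $\log\E[e^{\beta_{i+1}D^2_t}]$ over the phase, the first-order terms cancel exactly, because the closed instruction walk uses equally many north and south steps. The phase contains at most $n$ east/west steps, contributing at most $\tfrac{\lambda^2}2\cdot\tfrac n{t_{i+1}}+O(n\,t_{i+1}^{-2})$, and only $\Theta(\sqrt{t_i})=o(n)$ north/south steps, contributing $O(\sqrt{t_i}\cdot t_i^{-1})=o(1/\sqrt{t_i})$; since $n/t_{i+1}=\Theta(\ln t_i/\sqrt{t_i})$, all these error terms are $o(n/t_{i+1})$, and this estimate is uniform over admissible values of the Step~(2) variable (which perturb the step counts only by $o(\sqrt{t_i})$). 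Multiplying the three factors yields $m^N_{i+1}\le(m^N_i)^{\sqrt{t_i/t_{i+1}}}\,e^{(\tfrac12\lambda^2+\alpha+o(1))(t_{i+1}-t_i)/t_{i+1}}$, the claimed bound; the other directions are symmetric.

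The step I expect to be the main obstacle is this last error-term bookkeeping: one must be certain that the first-order cancellation is exact (which is precisely where closedness of the instruction walk is used) and that the surviving quadratic contributions -- both from the $o(n)$ steps parallel to the direction under consideration and from the Step~(2) randomness -- are genuinely $o\big((t_{i+1}-t_i)/t_{i+1}\big)$. Everything rests on a phase started at time $t$ having length $\Theta(\sqrt t\,\ln t)$ while using only $\Theta(\sqrt t)$ of its steps parallel to the given direction, so that the remaining, orthogonal steps supply the Gaussian term $\tfrac12\lambda^2$ but no drift.
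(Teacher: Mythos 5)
Your proof is correct and takes essentially the same route as the paper: the same three-factor decomposition into (i) removing the conditioning on $\{T>t_{i+1}\}$ via Inequality~\eqref{eq:cap}, (ii) Jensen's inequality to rescale the exponent from $\sqrt{t_{i+1}}$ to $\sqrt{t_i}$, and (iii) a per-step bound on the moment generating function of the independent phase increment, where your exact cancellation of the first-order terms from the closed instruction walk is equivalent to the paper's centering of each step by its drift $(1-p)(x^1_{s+1}-x^1_s)$. Your treatment is merely more explicit about the error terms, the Step~(2) randomness, and the count of steps parallel versus orthogonal to the chosen direction.
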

\begin{proof}
We show the inequality for $c=E$. The remaining cases can be shown analogously. We first note that for any $a\in \mathbb{Z}$, we have
\begin{align*}
&\mathbb{P}(X^1_{t_{i+1}} = a \mid T>t_{i+1}) \cdot \mathbb{P}(T>t_{i+1} \mid T>t_{i}) \\&\qquad = \mathbb{P}(X^1_{t_{i+1}} = a\text{ and }T>t_{i+1} \mid T>t_i)\\
&\qquad\leq\mathbb{P} (X^1_{t_{i+1}} = a \mid T>t_i)
\end{align*} 
Thus, $$\mathbb{P}(X^1_{t_{i+1}} = a \mid T>t_{i+1}) \leq \frac{1}{\mathbb{P}(T>t_{i+1} \mid T>t_{i}) } \mathbb{P}(X^1_{t_{i+1}} = a \mid T>t_{i})$$ and therefore $$\mathbb{E} \left[ e^{\frac{\lambda}{\sigma_1 \sqrt{t_{i+1}}}X^1_{t_{i+1}}} \mid T>t_{i+1} \right] \leq \frac{1}{\mathbb{P}(T>t_{i+1} \mid T>t_{i}) }\mathbb{E} \left[ e^{\frac{\lambda}{\sigma_1 \sqrt{t_{i+1}}}X^1_{t_{i+1}}} \mid T>t_{i} \right].$$
Hence, 
\[
	m^E_{i+1} \stackrel{*}{\leq} e^{(\alpha + o(1)) \frac{t_{i+1}-t_i}{t_{i+1}}} \cdot \mathbb{E} \left[ e^{\frac{\lambda}{\sigma_1 \sqrt{t_{i+1}}}X^1_{t_{i}}} \mid T>t_{i} \right] \cdot \mathbb{E} \left[ e^{\frac{\lambda}{\sigma_1 \sqrt{t_{i+1}}}(X^1_{t_{i+1}}-X^1_{t_i})} \right] \ .
\]
Notice that in the estimate marked by $*$, we use the bound given by Inequality~\ref{eq:cap} to get rid of the $1/\mathbb{P}(T > t_{i + 1} \mid T > t_i)$ term. 
By Jensen's inequality
$$\mathbb{E} \left[ e^{\frac{\lambda}{\sigma_1 \sqrt{t_{i+1}}}X^1_{t_{i}}} \mid T>t_{i} \right] \leq \left(\mathbb{E} \left[ e^{\frac{\lambda}{\sigma_1 \sqrt{t_{i}}}X^1_{t_{i}}} \mid T>t_{i} \right]\right)^{\frac{\sqrt{t_{i}}}{\sqrt{t_{i+1}}}} = \left(m_i^E\right)^{\frac{\sqrt{t_{i}}}{\sqrt{t_{i+1}}}}.$$
Moreover, following a standard derivation of a Chernoff bound, we have
$$\mathbb{E} \left[ e^{\frac{\lambda}{\sigma_1 \sqrt{t_{i+1}}}(X^1_{t_{i+1}}-X^1_{t_i})} \right] = \prod_{s=t_{i}}^{t_{i+1}-1} \mathbb{E} \left[ e^{\frac{\lambda}{\sigma_1 \sqrt{t_{i+1}}}\left(X^1_{s+1}-X^1_{s}-(1-p)(x^1_{s+1}-x^1_{s}) \right)} \right]= e^{(\frac{1}{2}+o(1)) \lambda^2 \frac{t_{i+1}-t_i}{t_{i+1}}},$$
where, in the last step, we used that most instructions between $t_{i}$ and $t_{i+1}$ go either west or east, meaning $X^1_{s+1}-X^1_s$ has standard deviation $\sigma_1$ in most steps.
\end{proof}

\begin{claim}\label{claim:boundgenfct}
For any $c\in\{N, S, E, W\}$, we have 
$$m_{i+1}^c \leq e^{\lambda^2+2\alpha + o(1)}.$$
\end{claim}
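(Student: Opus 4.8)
The goal is to prove Claim~\ref{claim:boundgenfct}, namely that $m_{i+1}^c \leq e^{\lambda^2 + 2\alpha + o(1)}$ for every cardinal direction $c$, using the recursion from Claim~\ref{claim:genfctrec}.

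\textbf{Approach.} The plan is to iterate the recursive inequality from Claim~\ref{claim:genfctrec} and show that the sequence $(m_i^c)$ stays bounded by roughly $e^{\lambda^2 + 2\alpha}$ once $i$ is large. Write $r_i := \sqrt{t_i / t_{i+1}} \in (0,1)$ and $\delta_i := (t_{i+1} - t_i)/t_{i+1} = 1 - r_i^2$, so that Claim~\ref{claim:genfctrec} reads $m_{i+1}^c \leq (m_i^c)^{r_i} e^{(\frac12\lambda^2 + \alpha + o(1))\delta_i}$. Taking logarithms and setting $L_i := \ln m_i^c$, this becomes $L_{i+1} \leq r_i L_i + (\tfrac12\lambda^2 + \alpha + o(1))\delta_i$. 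The natural candidate for the limiting bound is the fixed point $L^\star$ of the map $L \mapsto r L + (\tfrac12\lambda^2+\alpha)\delta$ when $r,\delta$ satisfy $\delta = 1-r^2$; solving $L^\star = rL^\star + (\tfrac12\lambda^2+\alpha)(1-r^2)$ gives $L^\star = (\tfrac12\lambda^2+\alpha)(1+r)$, which is at most $\lambda^2 + 2\alpha$ since $r<1$. So I would prove by induction that $L_i \leq \lambda^2 + 2\alpha + o(1)$ for all sufficiently large $i$: if $L_i \leq \lambda^2 + 2\alpha$, then $L_{i+1} \leq r_i(\lambda^2+2\alpha) + (\tfrac12\lambda^2+\alpha)(1-r_i^2) + o(1) = (\lambda^2+2\alpha)\big(r_i + \tfrac12(1-r_i^2)\big) + o(1)$, and $r + \tfrac12(1-r^2) = \tfrac12 + r - \tfrac12 r^2 = 1 - \tfrac12(1-r)^2 \leq 1$, so $L_{i+1} \leq \lambda^2 + 2\alpha + o(1)$, closing the induction.

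\textbf{Getting started / base case.} For the induction to begin I need $L_i$ finite (and eventually below the threshold) for some starting index. Here I would use that $X_0 = 0$, so $L_0 = \ln m_0^c = 0$, or more robustly that for the first phase $t_i = O(1)$ and the moment generating function of a walk of bounded length is bounded by an absolute constant; the contraction $r + \tfrac12(1-r^2) \leq 1 - \tfrac12(1-r)^2$, combined with the fact that $r_i$ is bounded away from $1$ (the phases grow geometrically, so $t_{i+1}/t_i$ is bounded, hence $r_i \leq \rho < 1$ for some fixed $\rho$), means the iteration actually contracts toward the fixed point at a geometric rate: $L_{i+1} - L^\star \leq \big(r_i + \tfrac12(1-r_i^2) - (\text{something})\big)(L_i - L^\star)$... more simply, one shows $\limsup_i L_i \leq (\tfrac12\lambda^2+\alpha)(1+\rho) + o(1) \leq \lambda^2 + 2\alpha + o(1)$. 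So any finite starting value is washed out and we land at the claimed bound.

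\textbf{Main obstacle.} The genuinely delicate point is bookkeeping the $o(1)$ terms. The $o(1)$ in Claim~\ref{claim:genfctrec} is as $t_i \to \infty$, i.e.\ as $i \to \infty$, and it is being fed through an infinite iteration; one must check that accumulating these error terms over the (infinitely many) phases does not blow up. Because the recursion is contractive with a uniform rate ($r_i$ bounded away from $1$), the influence of the error at phase $j$ on $L_i$ decays geometrically in $i - j$, and since the per-step errors tend to $0$, a standard argument (split the sum at a large fixed index $j_0$; the tail $j \geq j_0$ contributes at most $\sup_{j \geq j_0} o(1) \cdot \sum_k \rho^k$, the head is a fixed finite sum times $\rho^{i-j_0} \to 0$) shows the cumulative error is still $o(1)$. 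I would make this the one point to write carefully; the rest is the short convexity computation $r + \tfrac12(1-r^2) \leq 1$ above.
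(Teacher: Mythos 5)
Your fixed-point computation is the same one the paper uses (the inequality $r+\tfrac12(1-r^2)\le 1$ is exactly the paper's $\sqrt{t_i/t_{i+1}}\le 1-\tfrac12\tfrac{t_{i+1}-t_i}{t_{i+1}}$), but the part you yourself flag as the delicate point -- controlling the accumulated $o(1)$ errors and washing out the starting value -- rests on a false premise. You assume the phases grow geometrically, so that $t_{i+1}/t_i$ is bounded away from $1$ and $r_i=\sqrt{t_i/t_{i+1}}\le\rho<1$, giving a uniform contraction and geometric decay of the influence of past errors. In this strategy that is not the case: a phase starting at time $t$ lasts $\Theta(\sqrt{t}\ln t)=o(t)$ steps (see the proof of Proposition~\ref{prop:possibrucondition}), so $t_{i+1}/t_i\to 1$ and $r_i\to 1$. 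There is no uniform $\rho<1$, the per-step contraction slack $\tfrac12(1-r_i)^2$ is of smaller order than the per-step error $o(1)\cdot\tfrac{t_{i+1}-t_i}{t_{i+1}}$, and your ``split the sum at $j_0$, bound the tail by $\sum_k\rho^k$'' bookkeeping does not go through. So as written the induction $L_i\le\lambda^2+2\alpha\Rightarrow L_{i+1}\le\lambda^2+2\alpha+o(1)$ never actually closes.

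The repair is essentially what the paper does, and it does not need any uniform contraction. Fix $\epsilon>0$ and choose $i_0$ so that for all $i\ge i_0$ the error in Claim~\ref{claim:genfctrec} is absorbed into $\epsilon$, i.e.\ the exponent is at most $\left(\tfrac12\lambda^2+\alpha+\epsilon\right)\tfrac{t_{i+1}-t_i}{t_{i+1}}$; then the inflated target $e^{\lambda^2+2\alpha+2\epsilon}$ is an exact fixed point of the $\epsilon$-inflated recursion (your computation with $\lambda^2+2\alpha$ replaced by $\lambda^2+2\alpha+2\epsilon=2(\tfrac12\lambda^2+\alpha+\epsilon)$), so no slack is needed to absorb errors and the induction closes. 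The initial condition is handled not by a geometric rate but by the telescoping product $\prod_{j=i_0}^{i-1}r_j=\sqrt{t_{i_0}/t_i}\to 0$: one shows by induction that $m_i^c\le C^{\sqrt{t_{i_0}/t_i}}e^{\lambda^2+2\alpha+2\epsilon}$, and since $\epsilon$ was arbitrary this yields the stated $e^{\lambda^2+2\alpha+o(1)}$. Your overall plan (log-moment recursion, fixed point, convexity) is the right one; the gap is that the error-accumulation mechanism must exploit this exact-fixed-point/telescoping structure rather than a uniform geometric contraction, which simply is not available here.
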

\begin{proof}
For any fixed $\epsilon>0$, pick $i_0$ sufficiently large such that the exponent in the right-hand side of \Cref{claim:genfctrec} is at most $\left(\frac{1}{2} \lambda^2 + \alpha + \epsilon\right)\frac{t_{i+1}-t_i}{t_{i+1}}$ for any $i\geq i_0$.

Thus, picking a constant $C>1$ such that $m^c_{i_0}\leq C\cdot e^{\lambda^2+2\alpha + 2\epsilon}$ and observing that, as $\sqrt{\cdot}$ is concave, $\sqrt{\frac{{t_i}}{{t_{i+1}}}} \leq 1-\frac{1}{2} \frac{t_{i+1}-t_{i}}{t_{i+1}}$, it follows by induction that
$$ m_{i}^c \leq C^{\sqrt{\frac{t_{i_0}}{t_{i}}}} e^{\lambda^2+2\alpha+2\epsilon}$$
for all $i\geq i_0$, as desired.
\end{proof}

\begin{proof}[Proof of Proposition \ref{prop:chernoffbndgrandma}.]
By \Cref{claim:boundgenfct} and Markov's inequality, we can bound the probability of this event by $4e^{\lambda^2+2\alpha  - \lambda a + o(1)}.$ Minimizing over $\lambda$ yields the claimed probability.
\end{proof}

\subsection{Step 4: Wrapping it up}

Now, we have gathered almost all necessary claims to complete the proof of \Cref{thm: possibru}.
As the last steps, we first provide a (rather) simple observation and then bundle up the statements of this section into \Cref{prop:possibrucondition}, that almost immediately yields the theorem.

\begin{claim}\label{claim:algpowerlaw}
Suppose there exists an $i_0$ such that $\mathbb{P}(T\leq t_{i+1}\mid T>t_i) \geq \alpha \frac{t_{i+1}-t_i}{t_i}$ for all $i\geq i_0$. (That is, combined with Inequality~\ref{eq:cap}, we have equality for all large $i$.) Then $\mathbb{P}(T>t) = O(t^{-\alpha})$.
\end{claim}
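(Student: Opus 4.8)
The plan is to unwind the hypothesis into a product telescoping over phases and then convert the resulting product bound into a polynomial tail bound by comparing with an integral. Write $q_i := \mathbb{P}(T>t_{i+1}\mid T>t_i)$. By the hypothesis (together with Inequality~\eqref{eq:cap}, which forces the reverse inequality), for all $i\ge i_0$ we have $q_i = 1-\alpha\frac{t_{i+1}-t_i}{t_i} + o\!\left(\frac{t_{i+1}-t_i}{t_i}\right)$, and since phases grow geometrically (say $t_{i+1}/t_i$ is bounded, as is implicit in the construction of Section~\ref{sec: strategy}), each factor $q_i$ is bounded away from $1$ by a constant times $\frac{t_{i+1}-t_i}{t_i}$. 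For $j\ge i_0$, multiplying these,
\[
\mathbb{P}(T>t_j) \;=\; \mathbb{P}(T>t_{i_0})\prod_{i=i_0}^{j-1} q_i \;\le\; \mathbb{P}(T>t_{i_0})\exp\!\left(-\sum_{i=i_0}^{j-1}\left(\alpha+o(1)\right)\frac{t_{i+1}-t_i}{t_i}\right).
\]

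Next I would estimate the exponent. Since $t_{i+1}/t_i$ is bounded, on each block $[t_i,t_{i+1})$ the quantity $\frac{t_{i+1}-t_i}{t_i}$ is within a constant factor of $\int_{t_i}^{t_{i+1}}\frac{dt}{t} = \ln\frac{t_{i+1}}{t_i}$; more precisely $\ln\frac{t_{i+1}}{t_i}\le \frac{t_{i+1}-t_i}{t_i}$, so
\[
\sum_{i=i_0}^{j-1}\frac{t_{i+1}-t_i}{t_i}\;\ge\;\sum_{i=i_0}^{j-1}\ln\frac{t_{i+1}}{t_i}\;=\;\ln\frac{t_j}{t_{i_0}}.
\]
Feeding this in gives $\mathbb{P}(T>t_j)\le \mathbb{P}(T>t_{i_0})\left(t_j/t_{i_0}\right)^{-\alpha+o(1)} = O(t_j^{-\alpha})$, where the $o(1)$ can be absorbed since it only affects the implicit constant once $j$ (equivalently $t_j$) is large. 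Finally, for a general time $t$, pick $j$ with $t_j\le t<t_{j+1}$; monotonicity of $\mathbb{P}(T>\cdot)$ and boundedness of $t_{j+1}/t_j$ give $\mathbb{P}(T>t)\le \mathbb{P}(T>t_j) = O(t_j^{-\alpha}) = O(t^{-\alpha})$, completing the proof.

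The main obstacle, and the only genuinely delicate point, is making the $o(1)$ error terms uniform: the hypothesis only gives the matching lower bound for $i\ge i_0$, and the $o(1)$ in $q_i$ must be controlled well enough that summing $o(1)\cdot\frac{t_{i+1}-t_i}{t_i}$ over the range contributes only an $o(\ln t_j)$ term to the exponent, i.e. does not degrade the exponent $\alpha$. This is where one uses that $\sum \frac{t_{i+1}-t_i}{t_i} \asymp \ln t_j$ together with the fact that $o(1)$ genuinely tends to $0$: given $\varepsilon>0$, enlarge $i_0$ so that all error terms past it are below $\varepsilon$, obtaining exponent $\alpha-\varepsilon$ for all large $t$, and let $\varepsilon\to 0$ after fixing that the statement is only asymptotic. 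One should also note in passing that $t_{i+1}/t_i$ is bounded in the strategy of Section~\ref{sec: strategy} — if this is not literally true one instead uses $\ln\frac{t_{i+1}}{t_i}\le\frac{t_{i+1}-t_i}{t_i}$ directly, which holds unconditionally, so the lower bound on the exponent goes through regardless and only the matching upper bound (not needed here) would require the geometric-growth assumption.
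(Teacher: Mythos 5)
Your proof is correct and essentially the same as the paper's: both telescope $\mathbb{P}(T>t_i)=\mathbb{P}(T>t_{i_0})\prod_{j=i_0}^{i-1}\mathbb{P}(T>t_{j+1}\mid T>t_j)$ and bound each factor by $(t_{j+1}/t_j)^{-\alpha}$, your route via $1-\alpha x\le e^{-\alpha x}$ together with $\ln(1+x)\le x$ being equivalent to the paper's one-line convexity bound $(1+x)^{-\alpha}\ge 1-\alpha x$. One remark: the $o(1)$ error terms you introduce are not actually there --- the hypothesis gives the exact inequality $\mathbb{P}(T>t_{i+1}\mid T>t_i)\le 1-\alpha\frac{t_{i+1}-t_i}{t_i}$ for all $i\ge i_0$ --- so the uniformity discussion can simply be deleted, and indeed should be, since the $\varepsilon$-loss argument taken literally would only yield $O(t^{-\alpha+\varepsilon})$ rather than the claimed $O(t^{-\alpha})$.
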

\begin{proof}
As $t^{-\alpha}$ is a convex function, we have $\left(\frac{t_{i+1}}{t_i}\right)^{-\alpha} = \left(1+\frac{t_{i+1}-t_i}{t_i}\right)^{-\alpha}\geq 1-\alpha\frac{t_{i+1}-t_i}{t_i} $. This means that $\mathbb{P}(T>t_{i+1} \mid T>t_i) \leq \left(\frac{t_{i+1}}{t_i}\right)^{-\alpha}$ for all $i\geq i_0$. Thus, for any sufficiently large $i$, we get
\[ 
	\mathbb{P}(T>t_i) \leq \mathbb{P}(T>t_{i_0}) \prod_{j=i_0}^{i-1} \mathbb{P}(T>t_{j+1} \mid T>t_j) = \mathbb{P}(T>t_{i_0}) t_{i_0}^\alpha t_i^{-\alpha} = O(t_i^{-\alpha}) \ . \qedhere
\]
\end{proof}

\begin{proposition}\label{prop:possibrucondition}
Let $p_0, \alpha>0$ be given. Suppose there exists a constant $a>0$ such that
\[
	\left( 1- 4e^{-\frac{a^2}{4}+2\alpha}\right) \frac{(1-p_0)^2}{2a^2 p_0 \sqrt{3-2p_0}}>\alpha \ .
\]
Then, by setting the parameters $A=a\frac{p_0}{2}\sqrt{3-2p_0}$ and $B=a\frac{p_0}{2}$ in our sweeping strategy, it holds that
$\mathbb{P}(T>t) = O(t^{-\alpha})$.% In particular, putting $a=4.5$, yields $\mathbb{E}[T]<\infty$ for any $p<p_0=1/90$.
\end{proposition}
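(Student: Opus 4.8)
The plan is to chain together the results of Steps~2 and~3 into a single recursion on the conditional survival probabilities $\mathbb{P}(T>t_{i+1}\mid T>t_i)$, and then invoke Claim~\ref{claim:algpowerlaw}. First I would fix the phase schedule: since a phase starting at time $t$ lasts a number of steps that is $\Theta(t)$ (dominated by Step~(3), which takes $(N-1)G + 2NW = \Theta(\sqrt t)\cdot\Theta(\sqrt t/\ln t)\cdot\Theta(\ln t) = \Theta(t)$ steps, plus the $\Theta(t^{1/3})$ padding from Step~(4) to kill the randomness of $Z$), the starting times $t_i$ satisfy $t_{i+1}/t_i \to 1+c$ for some constant $c>0$ determined by $A,B,p_0$; I would either compute $c$ or, more cleanly, note that a refinement of the schedule lets us take $t_{i+1}-t_i = \epsilon t_i$ for any small constant $\epsilon$ (run several phases inside one ``super-phase'' if needed), so that $\frac{t_{i+1}-t_i}{t_i}\to 0$ and the $o(1)$ error terms from the previous propositions are genuinely negligible in the limit.

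Next I would combine Proposition~\ref{prop:phasegoodifgrandmanear} with Proposition~\ref{prop:chernoffbndgrandma}. Conditioned on $T>t_i$, the event that $X_{t_i}$ lies in the box $(\pm a\sigma_1\sqrt{t_i},\pm a\sigma_2\sqrt{t_i})$ has probability at least $1 - 4e^{-a^2/4 + 2\alpha + o(1)}$; note this is exactly the box $(\pm A'\sqrt{t_i}, \pm B'\sqrt{t_i})$ with $A' = a\sigma_1 = a\sqrt{\tfrac p2(3-2p)}$ and $B' = a\sigma_2 = a\sqrt{\tfrac p2}$, and since $p<p_0$ these are at most $A = a\tfrac{p_0}{2}\sqrt{3-2p_0}$ and $B = a\tfrac{p_0}{2}$ respectively (here one uses that $x\mapsto \sqrt{\tfrac x2(3-2x)}$ and $x\mapsto\sqrt{x/2}$ are increasing on $[0,1)$, wait --- more carefully, $\sqrt{p/2}\le \sqrt{p_0/2}$ needs $p\le p_0$, fine, and $\sqrt{\tfrac p2(3-2p)}\le \sqrt{\tfrac{p_0}2(3-2p_0)}$ holds since $x(3-2x)$ is increasing on $[0,3/4]$; for $p_0$ in our range this is the relevant interval). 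Hence the true position lies in the box $(\pm A\sqrt{t_i},\pm B\sqrt{t_i})$ used by the strategy, so Proposition~\ref{prop:phasegoodifgrandmanear} applies and, on that event, the agent reaches home during phase $i$ with probability at least $\tfrac{1-o(1)}{G}$. Since $G = \lceil \tfrac{1}{1-p_0}\tfrac{\sqrt{t_i}}{\ln t_i}\rceil$, and the phase length is $\Theta(t_i)$, we have $\tfrac1G = \Theta\!\big(\tfrac{(1-p_0)\ln t_i}{\sqrt{t_i}}\big)$; multiplying by the number of phases fitting into $[t_i, t_{i+1}]$, which is $\Theta\!\big(\tfrac{t_{i+1}-t_i}{\ln t_i \cdot \sqrt{t_i}}\cdot\tfrac{1}{\sqrt{t_i}}\big)$ --- one must be careful to book-keep constants here --- yields
\[
\mathbb{P}(T \le t_{i+1}\mid T>t_i) \;\ge\; \Big(1 - 4e^{-a^2/4 + 2\alpha}\Big)\cdot \frac{(1-p_0)^2}{2a^2\, p_0\sqrt{3-2p_0}}\cdot \frac{t_{i+1}-t_i}{t_i} \;-\; o\!\Big(\frac{t_{i+1}-t_i}{t_i}\Big),
\]
where the constant $\tfrac{(1-p_0)^2}{2a^2 p_0\sqrt{3-2p_0}}$ emerges from dividing the available ``length budget'' $t_{i+1}-t_i$ by the cost $\Theta(t)$ of one phase, with $A,B$ substituted by their definitions in terms of $a$ and $p_0$; the factor $\tfrac12$ comes from the $2W\approx \tfrac{2}{1-p_0}A\sqrt t$ horizontal extent and $a^2$ from $AB = a^2\tfrac{p_0^2}{4}\sqrt{3-2p_0}$.

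By hypothesis the leading constant exceeds $\alpha$, so for all sufficiently large $i$ we get $\mathbb{P}(T\le t_{i+1}\mid T>t_i) \ge \alpha\,\tfrac{t_{i+1}-t_i}{t_i}$; together with the model modification ensuring Inequality~\eqref{eq:cap}, this is equality up to the directions of the inequalities, and Claim~\ref{claim:algpowerlaw} gives $\mathbb{P}(T>t) = O(t^{-\alpha})$, as claimed. The main obstacle I anticipate is the bookkeeping in the displayed inequality above: precisely tracking how the constants $A = a\tfrac{p_0}{2}\sqrt{3-2p_0}$ and $B = a\tfrac{p_0}{2}$ propagate through $W, H, G, N$, through the per-phase success probability $1/G$, and through the number of phases per interval, so that the constant that survives is exactly $\tfrac{(1-p_0)^2}{2a^2 p_0\sqrt{3-2p_0}}$ and not merely something of that order --- this matters because Theorem~\ref{thm: possibru} optimizes the bound over $a$ and needs the honest constant. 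A secondary subtlety is confirming that the $o(1)$ terms from Propositions~\ref{prop:phasegoodifgrandmanear} and~\ref{prop:chernoffbndgrandma} are uniform enough across the (growing number of) phases inside one interval $[t_i,t_{i+1}]$ to be absorbed into a single $o\big(\tfrac{t_{i+1}-t_i}{t_i}\big)$ term.
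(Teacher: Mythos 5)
Your skeleton is the paper's: combine \Cref{prop:chernoffbndgrandma} with \Cref{prop:phasegoodifgrandmanear}, convert the per-phase success probability into a per-time-step rate, argue that the hypothesis forces Inequality~\eqref{eq:cap} to be saturated eventually, and close with \Cref{claim:algpowerlaw}. But the quantitative heart of the step --- the duration of a phase, which is exactly where the constant $\frac{(1-p_0)^2}{2a^2p_0\sqrt{3-2p_0}}$ comes from --- is miscomputed. With $W,H=\Theta(\sqrt t)$, $G=\Theta(\sqrt t/\ln t)$ and $N=\lceil 2H/G\rceil=\Theta(\ln t)$, a phase costs $(N-1)G+2NW+O(\sqrt t)=\Theta(\sqrt t\,\ln t)$ steps, not $\Theta(t)$; your product $\Theta(\sqrt t)\cdot\Theta(\sqrt t/\ln t)\cdot\Theta(\ln t)$ is the cost of the canonical full sweep, which the strategy is explicitly designed to avoid (see the ``Duration of a Phase'' discussion). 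This error makes your super-phase device with $t_{i+1}-t_i=\epsilon t_i$ both unnecessary --- taking one interval per phase already gives $\frac{t_{i+1}-t_i}{t_i}=\Theta(\ln t_i/\sqrt{t_i})\to 0$, and keeps the conditioning $T>t_i$ aligned with phase starts, which is what \Cref{prop:chernoffbndgrandma} provides --- and inconsistent with your own phase count: the expression $\Theta\bigl(\frac{t_{i+1}-t_i}{\sqrt{t_i}\ln t_i}\cdot\frac{1}{\sqrt{t_i}}\bigr)$ carries a spurious $1/\sqrt{t_i}$, and multiplied by $1/G=\Theta(\ln t_i/\sqrt{t_i})$ it yields $\Theta\bigl(\frac{t_{i+1}-t_i}{t_i^{3/2}}\bigr)$, not the rate $\Theta\bigl(\frac{t_{i+1}-t_i}{t_i}\bigr)$ your display asserts. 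The correct bookkeeping (the paper's) is: per-phase success at least $\bigl(1-4e^{-a^2/4+2\alpha}\bigr)(1-p_0)\frac{\ln t}{\sqrt t}(1-o(1))$, phase length $\frac{(4+o(1))AB\sqrt t\ln t}{1-p_0}$, and with $AB=a^2\frac{p_0}{2}\sqrt{3-2p_0}$ the per-step rate is $\bigl(1-4e^{-a^2/4+2\alpha}\bigr)\frac{(1-p_0)^2}{2a^2p_0\sqrt{3-2p_0}}\cdot\frac1t$. As written, your key displayed inequality is asserted rather than derived --- and you acknowledge as much.

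A second genuine problem is the box containment. You claim $a\sigma_2=a\sqrt{p/2}\leq a\frac{p_0}{2}$ for $p\leq p_0$, but $\sqrt{x/2}>x/2$ for all $0<x<2$, so for $p$ near $p_0$ the inclusion goes the wrong way: the concentration box of \Cref{prop:chernoffbndgrandma} is \emph{larger} than the strategy box with the literal $A,B$ of the statement, and you could not invoke \Cref{prop:phasegoodifgrandmanear}. What actually makes the argument consistent (and matches the phase length $\frac{(2+o(1))a^2p_0\sqrt{3-2p_0}\sqrt t\ln t}{1-p_0}$ used in the paper's proof) is reading the parameters as $A=a\sqrt{\tfrac{p_0}{2}(3-2p_0)}=a\sigma_1(p_0)$ and $B=a\sqrt{\tfrac{p_0}{2}}=a\sigma_2(p_0)$; then monotonicity of $p\mapsto p(3-2p)$ on $[0,3/4]$ and of $p\mapsto p/2$ gives $a\sigma_1(p)\leq A$ and $a\sigma_2(p)\leq B$ for $p<p_0$, and the constant in the hypothesis comes out exactly. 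You correctly sensed a mismatch in the stated $A,B$, but your resolution relies on a false inequality, so this step of the proposal would fail as written.
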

\begin{proof}
We know by \Cref{prop:chernoffbndgrandma} that the agent will be inside the box $(\pm a \sigma_1 \sqrt{t}, \pm a \sigma_2 \sqrt{t})$ with probability at least $1-4e^{-\frac{a^2}{4}+2\alpha}$. 
Moreover, by \Cref{prop:phasegoodifgrandmanear} we know that, given this event, one phase of our sweeping approach finds home with probability at least $(1-p_0)\frac{\ln t}{\sqrt{t}} - o(1)$, where the iteration takes $\frac{(2+o(1)) a^2 p_0\sqrt{3-2p_0}\sqrt{t}\ln t}{1-p_0}$ time, hence yielding a success probability per time step that, up to lower order terms, equals the left-hand side of the analytic expression in \Cref{prop:possibrucondition}. Under the hypothesis of the proposition, it follows that, eventually, Inequality~\eqref{eq:cap} is saturated, so eventually the success probability of an iteration will always equal $\alpha \frac{t_{i+1}-t_i}{t_i}$. But then, by \Cref{claim:algpowerlaw}, it follows that $\mathbb{P}(T>t)= O(t^{-\alpha})$, as desired.
\end{proof}

\begin{remark*}
It seems likely that the bound $p<0.01139\dots$, as obtained from our analysis, is far from optimal. In fact, we believe that the sweeping strategy described in this section should be able to perform much better than our analysis indicates -- perhaps even close to optimal. The main source of loss seems to be the concentration bounds on $X_t$. However, it seems like new ideas are required to gain a better understanding of the limits of our approach.

%	Most likely, the bound of $1/90$ is far from optimal. However, using our techniques, it is not likely that the numbers can be pushed much further.
	\end{remark*}

% Again, $1/90$ is not optimal, but it cannot be pushed much further without new ideas. Like $1/88.3$.

\section*{Acknowledgements}
We would like the thank Lucas Boczkowski, Sebastian Brandt and Manuela Fischer for fruitful discussions. A special thanks goes to Amos Korman, who introduced the problem.

\bibliographystyle{alpha}
\bibliography{references}
\end{document}